\documentclass{article}
\setlength{\oddsidemargin}{1cm}
\setlength{\evensidemargin}{1cm}
\textheight 8.7 in
\textwidth 6.0 in
\topmargin=-0.5in

\usepackage{latexsym,amssymb,amsfonts,amsmath,amsthm}
\usepackage{bm}
\usepackage{subcaption}

\usepackage{graphicx}

\newtheorem{theorem}{Theorem}
\newtheorem{lemma}{Lemma}
\newtheorem{corollary}{Corollary}
\newtheorem{proposition}{Proposition}
\newtheorem{remark}{Remark}

\usepackage{braket}

\newcommand{\subsubsubsection}{\@startsection{paragraph}{4}{\z@}%
{1.5\baselineskip \@plus.5\dp0 \@minus.2\dp0}%
{.5\baselineskip \@plus2.3\dp0}%
{\reset@font\normalsize\bfseries}
}
\newcommand{\subsubsubsubsection}{\@startsection{subparagraph}{5}{\z@}%
{1.5\baselineskip \@plus.5\dp0 \@minus.2\dp0}%
{.5\baselineskip \@plus2.3\dp0}%
{\reset@font\normalsize\itshape}
}

\begin{document}

\title{{\bf Normal variance mixture with arcsine law of 
an interpolating walk between \\ persistent random walk and quantum walk}
\vspace{0mm}} 

\author{
  Saori Yoshino$^{\dagger}$, 
  Honoka Shiratori$^{*}$, Tomoki Yamagami$^{*}$, \\
  Ryoichi Horisaki$^*$, 
  Etsuo Segawa$^{\dagger}$
  \\
  \\
  \small $^{\dagger}$Graduate School of Environment and Information Sciences \\
  \small Yokohama National University \\
  \small Hodogaya, Yokohama, 240-8501, Japan \\
  \\
  \small $^{*}$Department of Information Physics and Computing \\ 
  \small  The University of Tokyo\\
  \small  Bunkyo, Tokyo, 113-8656, Japan \\
}

\date{\empty }

\maketitle
\thispagestyle{empty}

\par\noindent

\begin{abstract}
We propose a model that interpolates between quantum walks and persistent (correlated) random walks using one parameter on the one-dimensional lattice. We show that 
the limit distribution is described by the normal variance mixture with the arcsine law. 

  \mbox{} \\
\noindent
    {\bf Keywords}: Limit theorem, Persistent (or Correlated) random walk, Discrete-time stochastic quantum walk.
\end{abstract}

\section{Introduction}
Discrete-time quantum walk on the one-dimensional lattice with the uniform quantum coin is one of the first triggers that led to the development of studies on discrete-time quantum walks~e.g.,\cite{Am1,Kendon,Konnobook,Meyer,Portugal}.
The time evolution is determined by the $2$-dimensional unitary matrix 
\[ C=\begin{bmatrix} a & b \\ c & d \end{bmatrix}. \]
Note that the unitarity of $C$ implies $|a|^2=|d|^2$, $|b|^2=|c|^2$ and $|a|^2+|b|^2=|c|^2+|d|^2=1$ and so on. 
The diagonal elements of $C$ represent the complex valued weights associated with the transmission to the same direction (left or right) as that of the previous step; while the off diagonal elements represent the complex valued weights associated with the reflection to the different direction from that of the previous step at each vertex. 
Let us denote the time evolution of this quantum walk by $\mathcal{L}^{QW}$. 
On the other hand, we choose the corresponding random walk model as persistent random walk (or equivalently correlated random walk)~\cite{BC,KonnoRW}, where the probabilities moving to left and right are determined by the choice of the directions at the previous time. Here in this paper, we set that the moving  probability at each time step and each position of the correlate random walk is locally represented by  
\begin{align*}
C\circ \bar{C}=\begin{bmatrix} |a|^2 & |b|^2 \\ |c|^2 & |d|^2 \end{bmatrix}.
\end{align*}
The total time evolution operator of such a persistent random walk mode is denoted by $\mathcal{L}^{RW}$. 
Since both dynamics are evolved on the {\it arcs} of the one-dimensional lattice, 
we find that it is possible to represent $\mathcal{L}^{QW}$ and $\mathcal{L}^{RW}$ so that both domains are the same and both operators preserve the trace. (See Section~\ref{sect:construction} for more detail.)  

So we propose the following natural model which interpolates between the persistent random walk and the quantum walk. 
At each time step independently, a walker flips a Bernoulli coin 
for the choice of $\mathcal{L}^{RW}$ or $\mathcal{L}^{QW}$, that is, 
a walker chooses $\mathcal{L}^{RW}$ with probability $p$, while it does $\mathcal{L}^{QW}$ with probability $q=1-p$. 
Let $\mu_t^{(p)}$ be the average of the distributions of the walker's position at time $t$ with respect to all the possible coin tosses and 
$X_t^{(p)}$ of a position of an interpolate walker at time $t$, that is, $X_t^{(p)}\sim \mu_t^{(p)}$. 
Here if a random variable $X$ follows a distribution $F$, we write $X\sim F$ in short. 
The distribution $\mu_t^{(p)}$ can be realized by taking the trace of the $t$-th iteration of the convex combination of $\mathcal{L}^{RW}$ and $\mathcal{L}^{QW}$: 
\[ \mathcal{L}_p:=p\mathcal{L}^{RW}+q\mathcal{L}^{QW}. \]
We emphasize that this model can be regarded as a discrete-time analogous to the Lindblad master equation for the {\it quantum stochastic walk}~\cite{SGTW, WRA} on one-dimensional lattice.
See \cite{WRA} for more detail, which is discussed in general connected graphs. 
In this paper, we characterize the limit behavior of the quantum stochastic walk model on the one-dimensional lattice by the normal variance mixture with the arcsine law as follows. 
\begin{theorem}[Main theorem]\label{thm:main} 
Let $\nu$ be the scaled arcsine distribution, such that 
\[ \nu(du)=\frac{\boldsymbol{1}_{\left(\frac{1-q}{1+q}r,\frac{1+q}{1-q}r\right)}(u)}{\pi\sqrt{\left|u-\frac{1-q}{1+q}r\right|\left|u-\frac{1+q}{1-q}r\right|}}du,
 \]
 where $r=|a|^2 / |b|^2$, and $\boldsymbol{1}_A(u)$ is the indicator function of the set $A\subset \mathbb{R}$.  
Set $\Sigma_\nu^2$ be a random variable following $\nu$. 
A position of an interpolating walker with the parameter $p\in(0,1)$ at time $t$ is denoted by $X_t^{(p)}$ with the initial state $\rho_0(x)=(1/2)\delta_0(x) I_2$. 
Then, $X_t^{(p)}/\sqrt{t}$ converges in distribution to the average of $N(0,\Sigma_\nu^2)$ in $\nu$, that is,     
\begin{equation*}
        \frac{X_t^{(p)}}{\sqrt{t}}\;\sim\; \mathbb{E}_\nu (\;N(0,\Sigma_\nu^2)\;)\;\;(t\rightarrow\infty),
    \end{equation*}
where 
$\mathbb{E}_\nu (\cdot)$ is the average in $\nu$.
\end{theorem}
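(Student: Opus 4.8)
The plan is to prove convergence in distribution through pointwise convergence of characteristic functions, invoking L\'evy's continuity theorem (the limit below is continuous at $\xi=0$). Writing $\mu_t^{(p)}(x)=\mathrm{tr}\,\rho_t(x,x)$ with $\rho_t=\mathcal L_p^{\,t}\rho_0$, I would first pass to momentum space by the discrete Fourier transform $\widehat\rho(k,k')=\sum_{x,y}e^{ikx-ik'y}\rho(x,y)$. Since $\mathcal L^{RW}$ and $\mathcal L^{QW}$ are translation invariant, the evolution decouples over momentum pairs, $\widehat\rho_t(k,k')=\widehat{\mathcal L}_p(k,k')^{\,t}[\widehat\rho_0(k,k')]$, where for each $(k,k')$ the symbol $\widehat{\mathcal L}_p(k,k')=p\,\widehat{\mathcal L}^{RW}(k,k')+q\,\widehat{\mathcal L}^{QW}(k,k')$ acts on $2\times2$ matrices, with $\widehat{\mathcal L}^{QW}(k,k')[\sigma]=U(k)\,\sigma\,U(k')^{*}$ and $U(k)$ the Fourier symbol of one quantum step. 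A short computation then expresses the characteristic function of the rescaled position as a single Brillouin-zone integral,
\begin{equation*}
\mathbb E\!\left[e^{i\xi X_t^{(p)}/\sqrt t}\right]=\frac{1}{2\pi}\int_{-\pi}^{\pi}\mathrm{tr}\,\Big(\widehat{\mathcal L}_p\big(k,k-\tfrac{\xi}{\sqrt t}\big)^{\,t}\big[\widehat\rho_0\big(k,k-\tfrac{\xi}{\sqrt t}\big)\big]\Big)\,dk ,
\end{equation*}
with $\widehat\rho_0\equiv \tfrac12 I_2$ for the initial state $\rho_0(x)=\tfrac12\delta_0(x)I_2$.

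The heart of the argument is the asymptotics of $\widehat{\mathcal L}_p(k,k-\epsilon)^{\,t}$ for $\epsilon=\xi/\sqrt t\to0$. At $\epsilon=0$ the symbol is a trace-preserving, completely positive map; because $C\circ\bar C$ is doubly stochastic and $U(k)$ is unitary, the maximally mixed state $\tfrac12 I_2$ is its unique stationary state, so $\widehat{\mathcal L}_p(k,k)$ has the simple leading eigenvalue $1$ with right eigenvector $\tfrac12 I_2$ and left eigenvector $\mathrm{tr}(\cdot)$, while for $p\in(0,1)$ the decohering $\mathcal L^{RW}$-part pushes the rest of the spectrum strictly inside the unit disk for a.e.\ $k$. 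I would then apply Kato--Rellich type perturbation theory in $\epsilon$ to the leading eigenvalue $\lambda(k,\epsilon)=1+\epsilon\lambda_1(k)+\epsilon^2\lambda_2(k)+\cdots$. The first-order term is $\lambda_1(k)=\mathrm{tr}\big(\partial_\epsilon\widehat{\mathcal L}_p(k,k-\epsilon)\big|_{0}[\tfrac12 I_2]\big)$, and using $\det U(k)\equiv\det C$ (hence $\mathrm{tr}(U(k)^{*}U'(k))=0$) together with the reflection symmetry of the doubly stochastic $C\circ\bar C$, one finds that the drift vanishes identically, $\lambda_1(k)\equiv0$. Consequently $\big(\lambda(k,\epsilon)\big)^{t}\to e^{-\frac12\xi^2\sigma^2(k)}$ with $\sigma^2(k):=-2\lambda_2(k)\ge0$, the non-leading eigenvalues contribute only $O\big((1-\delta)^t\big)$, the residual momentum-off-diagonal coherences are killed by the Riemann--Lebesgue lemma, and the spectral projection of $\tfrac12 I_2$ carries weight $1$, so the prefactor is identically $1$.

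It then remains to compute $\sigma^2(k)$ and to identify its law under uniform $k$. The coefficient $\lambda_2(k)=\tfrac12\langle \mathrm{tr}|\,\partial_\epsilon^2\widehat{\mathcal L}_p|\,\tfrac12 I_2\rangle+\langle \mathrm{tr}|\,\partial_\epsilon\widehat{\mathcal L}_p\,S_k\,\partial_\epsilon\widehat{\mathcal L}_p|\,\tfrac12 I_2\rangle$ (all derivatives at $\epsilon=0$) is obtained from the reduced resolvent $S_k=(1-\widehat{\mathcal L}_p(k,k))^{-1}$ on the complement of the stationary state; carrying out this $2\times2$ channel computation, I expect $\sigma^2(k)$ to take the closed sinusoidal form $\sigma^2(k)=\dfrac{r\,|1+q e^{2ik}|^{2}}{1-q^{2}}=\dfrac{r}{1-q^{2}}\big((1+q^{2})+2q\cos 2k\big)$, whose range as $k$ sweeps $[-\pi,\pi)$ is exactly $\big(\tfrac{1-q}{1+q}r,\tfrac{1+q}{1-q}r\big)$. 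Pushing the uniform law of $k$ through $u=\sigma^2(k)$ yields the scaled arcsine density $\nu$, so the limiting characteristic function is $\int e^{-\frac12\xi^2 u}\,\nu(du)=\mathbb E_\nu\big[e^{-\frac12\xi^2\Sigma_\nu^2}\big]$, which is precisely that of the normal variance mixture $\mathbb E_\nu\big(N(0,\Sigma_\nu^2)\big)$; as a consistency check, $q\to0$ collapses the interval to the point $r$ and recovers the single Gaussian $N(0,r)$ of the pure persistent walk.

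The main obstacle I anticipate is analytic rather than algebraic: making the interchange of $t\to\infty$ with the $k$-integral rigorous. This requires a spectral gap for $\widehat{\mathcal L}_p(k,k)$ uniform away from a measure-zero set of exceptional momenta (where the two eigenvalues of $U(k)$ collide, or where primitivity could fail and an eigenvalue approach $-1$), together with a bound on $\|\widehat{\mathcal L}_p(k,k-\epsilon)^{t}\|$ uniform in $t$ and $k$ so that dominated convergence applies. Establishing this uniform gap and confirming primitivity for every $p\in(0,1)$ is the delicate step; the second-order perturbation computation of $\sigma^2(k)$ through the reduced resolvent is the main calculational grind but becomes routine once the spectral picture above is secured.
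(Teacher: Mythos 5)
Your proposal follows essentially the same route as the paper: Fourier transform of the density matrix to reduce to powers of a $(k,\epsilon)$-dependent symbol, Kato perturbation of the simple leading eigenvalue $1$ (whose eigenvector is the maximally mixed state, with the rest of the spectrum strictly inside the unit disk for $p\in(0,1)$), vanishing first-order drift, a second-order coefficient computed via the reduced resolvent giving a momentum-dependent variance of sinusoidal form, and the pushforward of the uniform law of $k$ yielding the arcsine mixing measure. Your anticipated $\sigma^2(k)=r(1+q^2+2q\cos 2k)/(1-q^2)$ matches the paper's $r(1+q^2-2q\cos 2l)/(1-q^2)$ up to a harmless shift of the integration variable, so the resulting mixture is identical; the uniform-gap/dominated-convergence issue you flag is real but is also passed over lightly in the paper's own argument.
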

The following corollaries are equivalent expressions to Theorem~\ref{thm:main}. 
\begin{corollary}
Let $Y_*^{(p)}$ be a random variable following the limit distribution of $X_t^{(p)}/\sqrt{t}$ $(t\to\infty)$. 
The limit density function of $Y_*^{(p)}$ is expressed by
\begin{equation}\label{eq:limitdensity} 
f_*(x)
=\mathbb{E}_\nu\left[ \frac{e^{-\frac{x^2}{2\Sigma^2_\nu}}}{\sqrt{2\pi\Sigma_\nu^2}} \right]
=\int_{-\infty}^{\infty}\frac{e^{-\frac{x^2}{2u}}}{\sqrt{2\pi u}}\;\frac{\boldsymbol{1}_{\left(\frac{1-q}{1+q}r,\frac{1+q}{1-q}r\right)}(u)}{\pi\sqrt{\left|u-\frac{1-q}{1+q}r\right|\left|u-\frac{1+q}{1-q}r\right|}}du.
\end{equation}
\end{corollary}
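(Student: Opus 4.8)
The plan is to obtain the corollary directly from Theorem~\ref{thm:main} by unwinding the definition of the normal variance mixture, since the two statements merely describe the same limiting law in two different ways. First I would recall that Theorem~\ref{thm:main} identifies the weak limit of $X_t^{(p)}/\sqrt{t}$ as $\mathbb{E}_\nu(N(0,\Sigma_\nu^2))$: the law obtained by first sampling a variance $\Sigma_\nu^2=u$ from the scaled arcsine law $\nu$ and then, conditionally on that value, sampling from the centred Gaussian $N(0,u)$. Writing $\phi_u(x)=e^{-x^2/(2u)}/\sqrt{2\pi u}$ for the $N(0,u)$ density, the entire content of the corollary is the identity $f_*(x)=\int_0^\infty \phi_u(x)\,\nu(du)$, followed by substitution of the explicit arcsine density for $\nu(du)$.

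The key steps, in order, are as follows. (i) Observe that for $q\in(0,1)$ and $r=|a|^2/|b|^2>0$ the endpoints $\tfrac{1-q}{1+q}r$ and $\tfrac{1+q}{1-q}r$ of the support of $\nu$ are both strictly positive, so $\nu$ is supported on a compact subinterval of $(0,\infty)$ and every conditioning variance $u$ is nondegenerate and bounded away from $0$. (ii) Condition on $\Sigma_\nu^2$ and apply the tower property: for the mixture the cumulative distribution function is $F_*(x)=\int_0^\infty \Phi_u(x)\,\nu(du)$, where $\Phi_u$ denotes the $N(0,u)$ distribution function. (iii) Differentiate in $x$ under the integral sign to obtain $f_*(x)=\int_0^\infty \phi_u(x)\,\nu(du)$, and insert the explicit form of $\nu(du)$ from the statement to reach \eqref{eq:limitdensity}.

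The only point requiring care — and hence the main obstacle — is justifying the interchange of differentiation and integration in step (iii), equivalently the absolute continuity of the mixture together with the validity of the density formula. This is where step (i) does the work: since $u$ ranges over the compact interval $[\tfrac{1-q}{1+q}r,\tfrac{1+q}{1-q}r]\subset(0,\infty)$, the family $\{\phi_u\}$ and its $x$-derivatives are uniformly bounded and dominated there, so dominated convergence (or Fubini together with a standard differentiation-under-the-integral argument) applies and guarantees that $f_*$ is a genuine probability density integrating to $1$. Once this is in place, the substitution of the arcsine density is a purely notational rewriting, and no further analytic estimate is needed.
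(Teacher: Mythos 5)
Your proposal is correct. Note, though, that the paper gives no separate proof of this corollary at all: it is introduced with the sentence that the corollaries are ``equivalent expressions'' of Theorem~\ref{thm:main}, and the identity \eqref{eq:limitdensity} is in effect already established inside the proof of Theorem~\ref{thm:main}, where the limiting characteristic function $\int e^{-u\xi^2/2}\,\nu(du)$ is recognized as the Fourier transform of $x\mapsto\int \phi_u(x)\,\nu(du)$, so the density formula follows by Fourier inversion (L\'evy continuity). Your route is genuinely different in mechanism: you condition on $\Sigma_\nu^2$, write the mixture CDF $F_*(x)=\int\Phi_u(x)\,\nu(du)$ via the tower property, and differentiate under the integral, using the fact that $\nu$ is supported on a compact subinterval of $(0,\infty)$ so that $\sup_u\phi_u(x)\le (2\pi u_{\min})^{-1/2}$ dominates the difference quotients. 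That observation in your step (i) is exactly the right justification and is the one analytic point the paper leaves implicit; your argument is more self-contained probabilistically, while the paper's is shorter because the Fourier identity is already on the table from the proof of the main theorem. Both establish the corollary; no gap.
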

\begin{corollary}
Let $Z$ be a random variable following the normal distribution $N(0,1)$. 
Assume that the random variables $\sigma^{(p)}:=|\sqrt{\Sigma_\nu^2}|$ and $Z$ are independent. 
Then the (cumulative) distribution of $Y^{(p)}_*$ is described by
\[ P(Y_*^{(p)}\leq x)=P(\sigma^{(p)} Z\leq x), \]
for any $x\in \mathbb{R}$, which is nothing but the mixing variance of the normal distribution~\cite{HH}. 
\end{corollary}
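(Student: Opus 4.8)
The plan is to verify the claimed identity of cumulative distribution functions directly, showing that both sides reduce to one and the same integral against $\nu$; this amounts to recognizing that the normal variance mixture of the preceding corollary is, by definition, the law of a scale mixture of a standard Gaussian. Throughout I will write $\Phi(y)=\int_{-\infty}^{y}(2\pi)^{-1/2}e^{-s^2/2}\,ds$ for the standard normal cumulative distribution function, and I record at the outset the structural fact that makes the argument work: since the support of $\nu$ is the interval $\left(\frac{1-q}{1+q}r,\frac{1+q}{1-q}r\right)$, which lies strictly inside $(0,\infty)$ for every $q\in(0,1)$ and $r>0$, the random variable $\Sigma_\nu^2$ is almost surely strictly positive, so that $\sigma^{(p)}=|\sqrt{\Sigma_\nu^2}|=\sqrt{\Sigma_\nu^2}$ is well defined and positive and the conditional Gaussian $N(0,\Sigma_\nu^2)$ is nondegenerate.

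First I would compute the left-hand side. By the preceding corollary the law of $Y_*^{(p)}$ has density $f_*$, so its cumulative distribution function is obtained by integrating $f_*$ and inserting the mixture representation:
\begin{equation*}
P\!\left(Y_*^{(p)}\le x\right)=\int_{-\infty}^{x} f_*(y)\,dy=\int_{-\infty}^{x}\left(\int_{0}^{\infty}\frac{e^{-y^2/(2u)}}{\sqrt{2\pi u}}\,\nu(du)\right)dy.
\end{equation*}
Since the integrand is nonnegative, Tonelli's theorem lets me exchange the order of integration, and the inner $y$-integral is exactly the cumulative distribution function of $N(0,u)$ evaluated at $x$; the substitution $y=\sqrt{u}\,s$ gives $\int_{-\infty}^{x}\frac{e^{-y^2/(2u)}}{\sqrt{2\pi u}}\,dy=\Phi(x/\sqrt{u})$. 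Hence
\begin{equation*}
P\!\left(Y_*^{(p)}\le x\right)=\int_{0}^{\infty}\Phi\!\left(\frac{x}{\sqrt{u}}\right)\nu(du)=\mathbb{E}_\nu\!\left[\Phi\!\left(\frac{x}{\sigma^{(p)}}\right)\right].
\end{equation*}

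Next I would compute the right-hand side by conditioning on $\sigma^{(p)}$. Using the assumed independence of $\sigma^{(p)}$ and $Z$, together with the fact that for a fixed value $\sigma^{(p)}=\sqrt{u}>0$ one has $\sqrt{u}\,Z\sim N(0,u)$, the tower property yields
\begin{equation*}
P\!\left(\sigma^{(p)}Z\le x\right)=\mathbb{E}\!\left[\,P\!\left(\sigma^{(p)}Z\le x\;\middle|\;\sigma^{(p)}\right)\right]=\mathbb{E}_\nu\!\left[\Phi\!\left(\frac{x}{\sigma^{(p)}}\right)\right],
\end{equation*}
where in the last step I used that $\sigma^{(p)}=\sqrt{\Sigma_\nu^2}$ with $\Sigma_\nu^2\sim\nu$. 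Comparing the two displays shows that $P(Y_*^{(p)}\le x)=P(\sigma^{(p)}Z\le x)$ for every $x\in\mathbb{R}$, which is the assertion.

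Since the argument is essentially a reformulation, there is no serious obstacle; the only points requiring care are the justification of the interchange of integrals, which is immediate from Tonelli because the Gaussian kernel is nonnegative, and the positivity of the support of $\nu$, which guarantees that $\sigma^{(p)}$ is a genuine positive scale factor and that no division by zero occurs in $\Phi(x/\sqrt{u})$. This computation is the one-dimensional instance of the standard equivalence between normal variance (scale) mixtures and conditionally Gaussian laws, so that Theorem~\ref{thm:main} and both corollaries are three equivalent descriptions of the same limit law.
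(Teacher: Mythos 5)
Your proof is correct, and it takes the route the paper itself intends: the paper offers no explicit proof of this corollary, presenting it as an immediate reformulation of the density formula in the first corollary, and your computation — reducing both cumulative distribution functions to $\mathbb{E}_\nu\left[\Phi\left(x/\sqrt{u}\right)\right]$ via Tonelli on one side and conditioning on $\sigma^{(p)}$ on the other — is exactly the standard justification of that equivalence. Your observation that the support of $\nu$ lies strictly inside $(0,\infty)$, so that $\sigma^{(p)}$ is a genuine positive scale factor, is the only point of care and you handle it correctly.
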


The variance of the limit distribution is given by a random variable following the arcsine distribution, where the highest probability is taken around both end points, that is, $r(1-q)/(1+q)$ and $r(1+q)/(1-q)$.
When $q$ approaches $0$, $\nu$ can be regarded as a delta function concentrated at $|b|^2/|a|^2$, which is consistent with the limiting distribution of the (purely) persistent random walk~\cite{KonnoRW}. On the other hand, as $q$ approaches $1$, the domain of the density function that gives the variance spans all positive real numbers, reflecting the linear spreading of the quantum walk at $q=1$. 

Such a crossover between the quantum walk and the persistent (or correlated) random walk models is first discussed in \cite{KMS} by a geometric control, and it is shown that a glimpse of the ballistic spreading of the quantum walk is already seen even in the geometry where the walk is closed to the random walk. 
On the other hand, there are other types of quantum walk models that exhibit essentially diffusive spreading, for example, \cite{BCA,AVWW}. In \cite{BCA}, the quantum walk with decoherence is proposed and the diffusive spreading of the Hadamard based walk. It seems that this model can be reproduced by setting $|a|=1/2$ in our walk model. Note that when $|a|=1/2$, the persistent random walk is simply reduced to the isotropic random walk. We connect the persistent random walk, which is a {\it double} Markov process and the quantum walk by considering $0<|a|<1$ and characterize the interpolating walk model by the weak convergence. 
In \cite{AVWW}, the random choice of unitary operators at each time step induces decoherence and, as a result, the diffusive spreading with a drift exhibits. 
Recently, another discrete-time quantum stochastic walk model on $1$-dimensional lattice is proposed and the recurrence properties are discussed in \cite{SPYGJ}. 

This paper is organized as follows. 
In Section~2, we give the numerical demonstration to see the consistency of our main result Theorem~\ref{thm:main}. In Section~\ref{sect:construction}, we give the detailed definition of our interpolating walk model. In Section $3$, we show the proofs of the main results.
Section $4$ presents the summary and discussion. The main theorem gives the discontinuity of the limit theorem with respect to the parameter $p$ at $p=0$. Then we discuss the limit theorem for $p$ close to $0$.  
\section{Numerical demonstrations}
Let us see the demonstrations of the interpolating walks with the parameters $p$ given by the numerical simulations and the consistency of Theorem~\ref{thm:main}. 

First, we show the distributions $\mu_t^{(p)}$ of the interpolating walks given by the numerical simulations with $a=b=c=-d=1/\sqrt{2}$ and the parameters $p=1/100$, $1/10$, $1/2$ and $9/10$ in Figures~\ref{fig:dist}(a)--(d), respectively, at the final time $s=100$.
We can observe a transition from distributions of 
the quantum walk to the random walk as $p$ increases. 
The shape of the distribution rapidly gets closer to a normal distribution along with the growth of the value of $p$; the peak around the origin has already been higher than those around the edges in $p=1/10$ shown in Figure~\ref{fig:dist}(b). 
This observation supports our statement of the main theorem that once the rate of decoherence $p$ deviates from $0$ slightly, the shape of the limit distribution of $X_t^{(p)}/\sqrt{t}$ has the structure of a normal distribution. 
\begin{figure}[htb]
  \centering
  \begin{minipage}{0.4\columnwidth}
     \centering
     \includegraphics[width=\columnwidth]{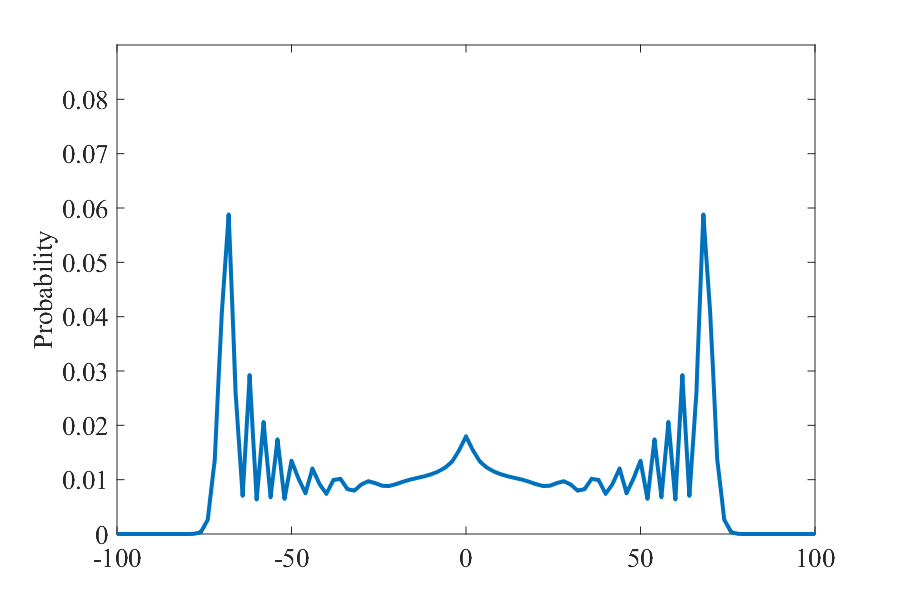}
     \subcaption{$p=1/100$.}
  \end{minipage}
  \begin{minipage}{0.4\columnwidth}
     \centering
     \includegraphics[width=\columnwidth]{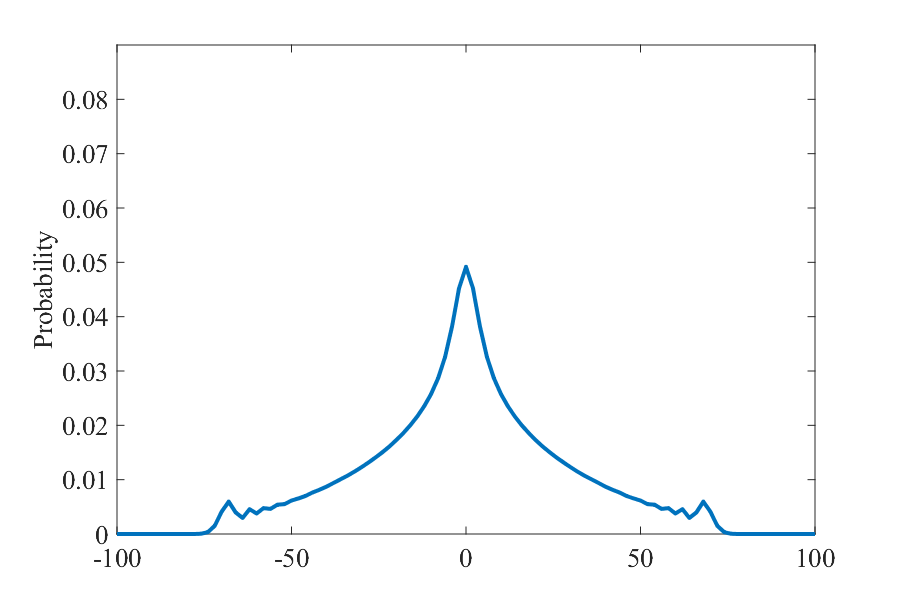}
     \subcaption{$p=1/10$.}
  \end{minipage}
  \begin{minipage}{0.4\columnwidth}
     \centering
     \includegraphics[width=\columnwidth]{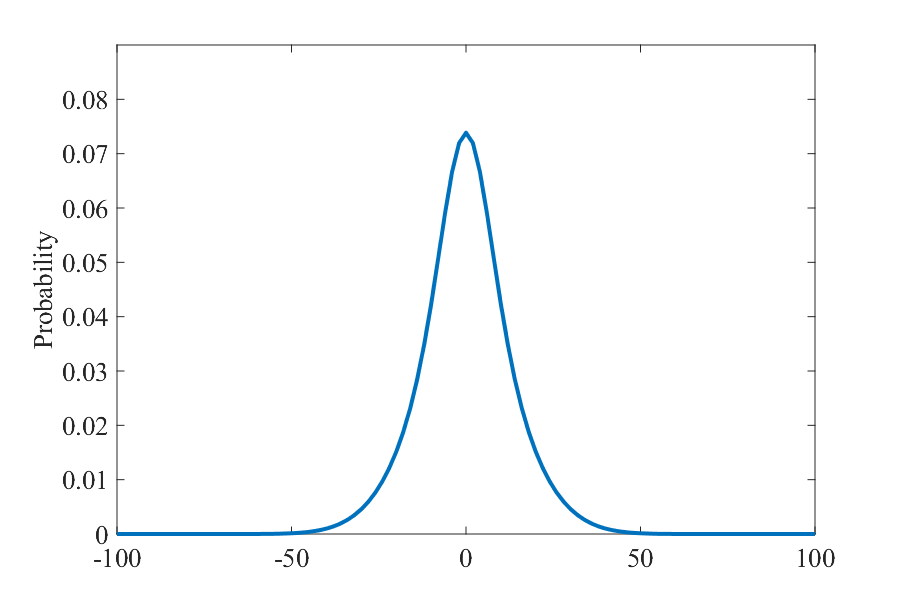}
     \subcaption{$p=1/2$.}
  \end{minipage}
  \begin{minipage}{0.4\columnwidth}
     \centering
     \includegraphics[width=\columnwidth]{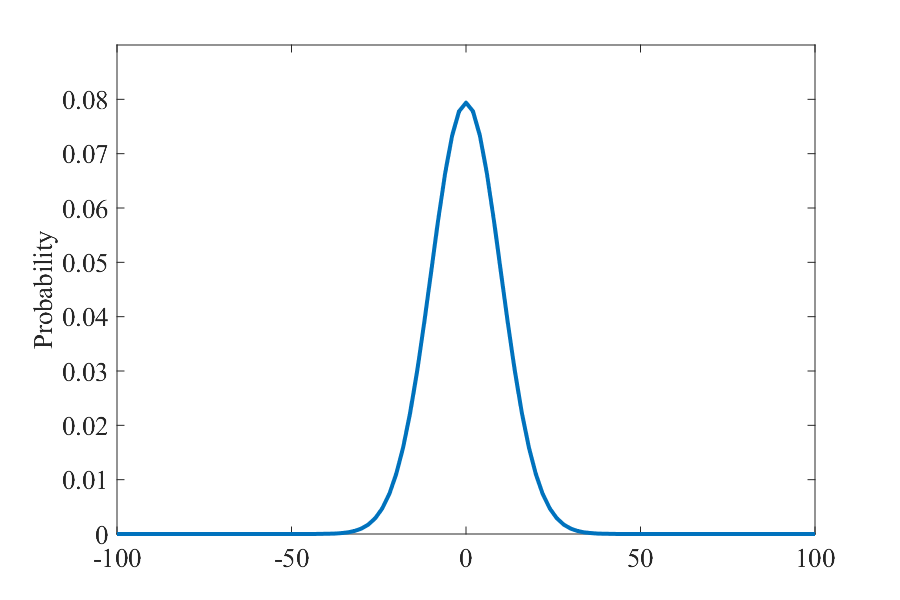}
     \subcaption{$p=9/10$.}
  \end{minipage}
      \caption{Probability distribution $\mu_t^{(p)}$ of the interpolating walks with $a=b=c=-d=1/\sqrt{2}$ at the final time $s=100$. The parameters $p$ of the four panels are set to (a) $1/100$, (b) $1/10$, (c) $1/2$, and (d) $9/10$, respectively. Note that the values only on the positions labeled with even numbers are plotted; the ones on the odd-numbered positions are zero.}\label{fig:dist}
\end{figure}

Secondly, let us discuss the consistency of our analytical result with Theorem~\ref{thm:main} in more detail.
Note that if $x+t$ is odd, then $\mu_t(x)=0$ for any $t\in \mathbb{N}$ and $x\in \mathbb{Z}$. 
The following step function $f_t$, which satisfies $\int_{-\infty}^\infty f_t(x)dx=1$ and reflects such a parity with the position and time, should approximate the limit density function $f_*$ of $\mathbb{E}_\nu(\;N(0,\Sigma_\nu^2)\;)$ in Theorem~\ref{thm:main} in that 
\begin{equation}\label{eq:weakconvergence} 
\lim_{t\to\infty}P(X_t/\sqrt{t}<x)=\lim_{t\to\infty}\int_{-\infty}^x f_t(u)du=\int_{-\infty}^{x} f_*(u)du, \end{equation}
for any $x\in\mathbb{R}$: 
\[ f_{t}(u)=\frac{\sqrt{t}}{2}\times 
\begin{cases} 
\mu_t(-t) & \text{: $-\sqrt{t}-\frac{1}{\sqrt{t}}<u\leq -\sqrt{t}+\frac{1}{\sqrt{t}}$,}\\
\mu_t(-t+2) & \text{: $-\sqrt{t}+\frac{1}{\sqrt{t}}<u\leq -\sqrt{t}+\frac{3}{\sqrt{t}}$,}\\
\qquad\vdots \\
\mu_t(t-2) & \text{: $\sqrt{t}-\frac{3}{\sqrt{t}}<u\leq \sqrt{t}-\frac{1}{\sqrt{t}}$,}\\
\mu_t(t) & \text{: $\sqrt{t}-\frac{1}{\sqrt{t}}<u\leq \sqrt{t}+\frac{1}{\sqrt{t}}$,}\\
0 & \text{: otherwise.}
\end{cases} \]
Figure~\ref{fig:comparison} shows the simultaneous plot of $f_t$ and $f_*$ for $t=100$ and $p=1/2$. 
Here we use the second equality in Eq.~(\ref{eq:limitdensity}) to approximately draw the limit density function of $Y_*^{(p)}=\sigma^{(p)}Z$, $f_*$, by the Gauss-Kronrod quadrature approximation \cite{Kronrod}. 
We see that the step function $f_t(x)$ appears to be close to $f_*(x)$ for sufficiently large $t$, as expected with Eq.~(\ref{eq:weakconvergence}).
\begin{figure}[htb]
  \centering
       \includegraphics[width=10cm]{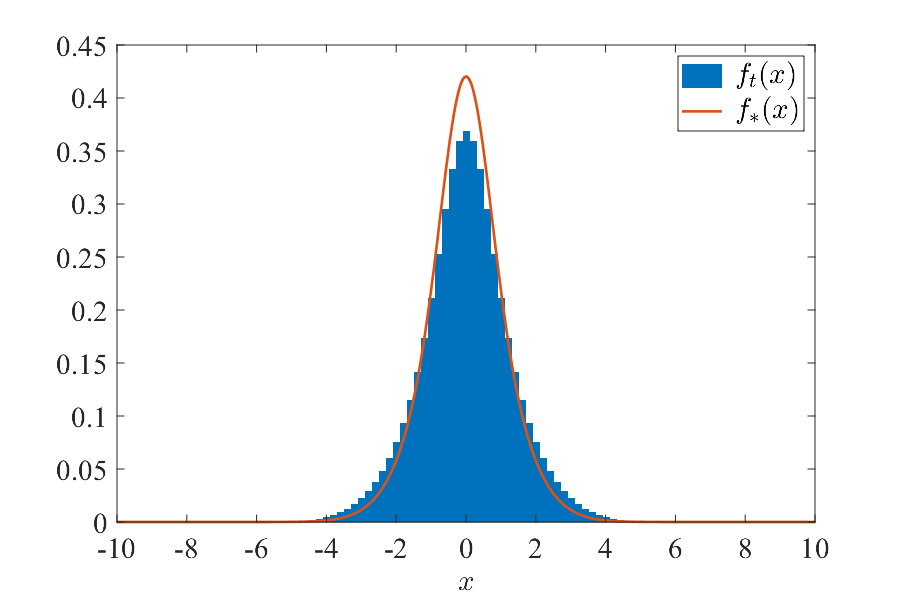}
     \caption{Comparison between $f_t(x)$ and $f_*(x)$: We set $t=100$, $p=1/2$ and $a=b=c=-d=1/\sqrt{2}$. The blue step function shows $f_t(x)$ with $t=100$ while the orange curve shows the limit density function $f_*(x)$. }
     \label{fig:comparison}
\end{figure}

Finally, let us discuss the case that the parameter $p$ is very close to $0$. 
It is well-known that if the parameter $p=0$, $X_s^{(p)}/s$ converges to the Konno function~\cite{Konno1}. On the other hand, if the parameter $p$ even slightly greater than $0$, the shape of the limit distribution of $X_s^{(p)}/\sqrt{s}$ has the structure of a normal distribution by our main theorem.
In this sense, Theorem~\ref{thm:main} indicates that the limit distribution in $p>0$ is clearly distinguished from that in $p=0$. 
However, in the above setting, it is assumed that the parameter $p$ is independent of the final time $s$ even if $p$ is small, which implies that the decoherence occurs infinitely many times when $s\to\infty$. Then we set the parameter $p$ by $\lambda/s$ with some positive constant value $\lambda>0$ to avoid the infinitely many uses of the evolution operator $\mathcal{L}^{RW}$. 
Note that such a setting gives the Poisson occurrence of $\mathcal{L}^{RW}$ because the probability that ``the number of choices of the evolution operator $\mathcal{L}^{RW}$ during the time span $s$ is $k$" can be expressed by
\[ \binom{s}{k}p^{k}(1-p)^{s-k} \to e^{-\lambda }\frac{\lambda^k}{k!}\;\;\;\;(s\to\infty). \]
Then the number of occurrence $\mathcal{L}^{RW}$ during the infinite time span is finite  (its average is $\lambda$), and the time interval at which the evolution $\mathcal{L}^{RW}$ is used follows the exponential distribution with parameter $\lambda$.  
We discuss the behavior of the interpolating walk in such a natural setting of $p=\lambda/s$ in Section~\ref{sect:summary} and obtain an expression of the characteristic function of $X_t^{(p)}/t$ in the long-time limit, which shows the ballistic spreading in Proposition~\ref{prop:poisson}. 
Figure~\ref{fig:poisson} shows a distribution $\mu_s^{(1/s)}$ at the final time $s$ with the parameter $p=1/s$. 
A ``protrusion" is seen in the middle of a typical quantum walk distribution. 
Its characterization is left as an open problem in this paper.  
\begin{figure}[htb]
  \centering
       \includegraphics[width=10cm]{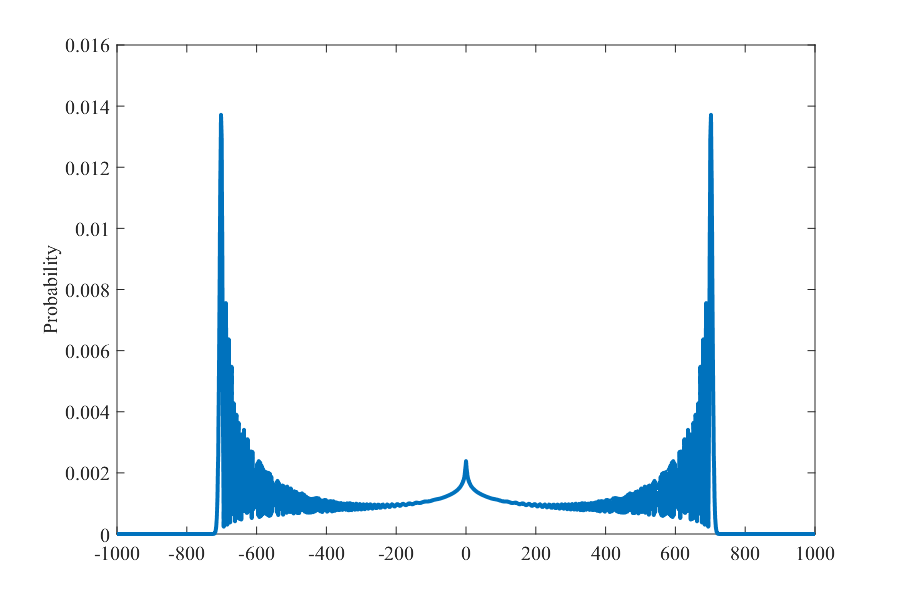}
     \caption{The numerical simulation of the distribution $\mu_s^{(\lambda/s)}$ with the parameter $p=\lambda/s$ at the final time $s\gg 1$. We set $\lambda=1$, $s=10^3$ and $a=b=c=-d=1/\sqrt{2}$. }
     \label{fig:poisson}
\end{figure}
\section{Setting of interpolating walk}\label{sect:construction}
\subsection{Random walk}
Let $A_\mathbb{Z}$ be the set of symmetric arcs of $\mathbb{Z}$. Each element of $A_\mathbb{Z}$ is denoted by $(x,x-1)$ $ (x\in\mathbb{Z})$. Here the origin and terminus of $(x,x-1)$ are $x$ and $x-1$, respectively. Note that the inverse of $(x,x-1)$ is $(x-1,x)$. The total state space is set by $\ell^2(A_\mathbb{Z})$. For each time step $t=0,1,2,\cdots$ with the initial state $ \psi_{0}\in \ell^2(A_\mathbb{Z})$, $\psi_t\in \ell^2(A_\mathbb{Z})$ is obtained by the following iterations : $\psi_{t+1}= M\psi_{t}$. Here the local time evolution at $x\in \mathbb{Z}$ is denoted by 

\begin{figure}[htb]
  \centering
  \begin{minipage}{0.4\columnwidth}
     \centering
     \includegraphics[width=5cm]{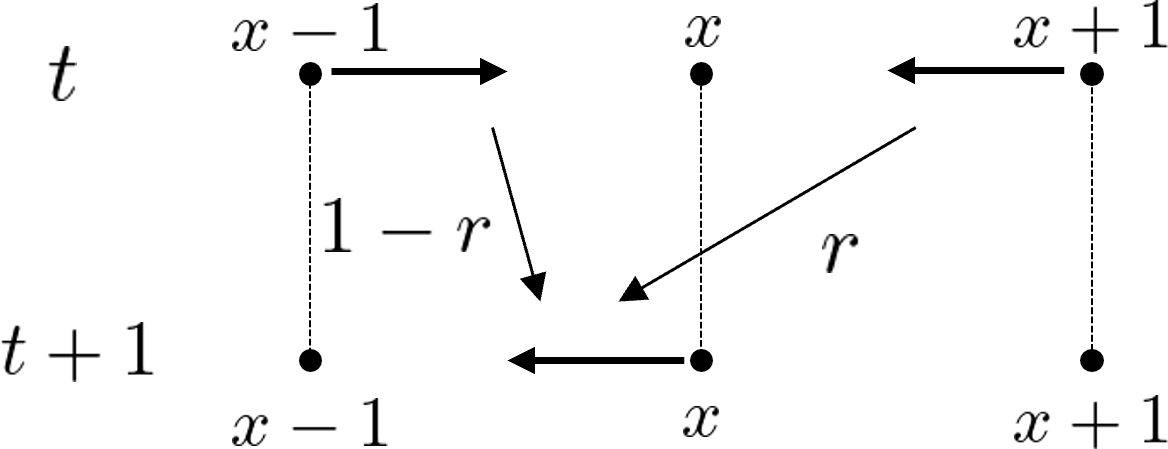}
     \caption{The recursion (\ref{2})}
     \label{fig:1}
  \end{minipage}
  \begin{minipage}{0.4\columnwidth}
     \centering
     \includegraphics[width=5cm]{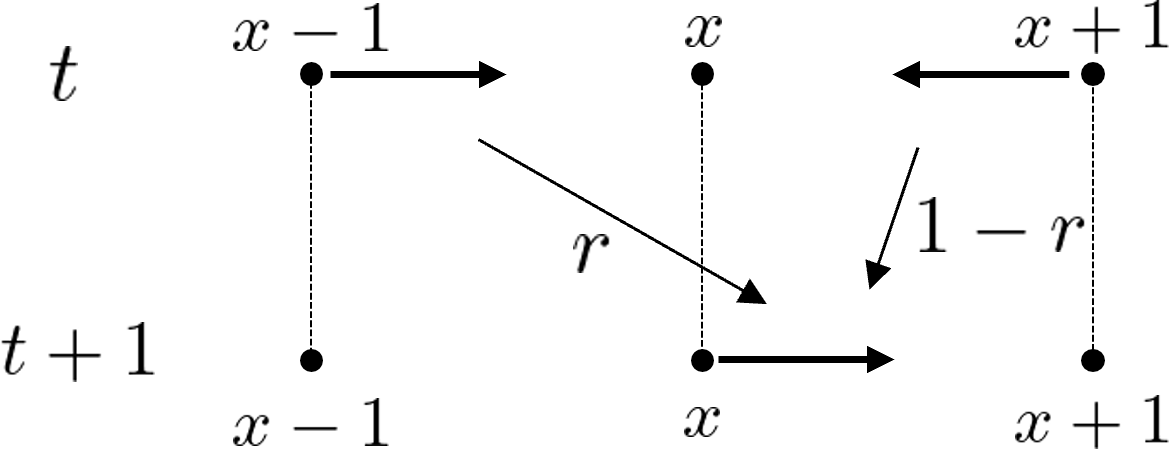}
     \caption{The recursion (\ref{3})}
     \label{fig:2}
  \end{minipage}
\end{figure}

\begin{equation}\label{2}
    \psi_{t+1}(x,x-1)=r\psi_{t}(x+1,x)+
    (1-r)\psi_{t}(x-1,x)
\end{equation} 
\begin{equation}\label{3}
 \psi_{t+1}(x,x+1)=(1-r)\psi_{t}(x+1,x)+r\psi_{t}(x-1,x)
\end{equation} 
for every $x\in \mathbb{Z}$ and $t\in \mathbb{N}$. This random walk model is called a persistent random/correlated random walk. At each time, the random walker chooses with probability $r$ the same direction as the one chosen exactly $1$ step before (right or left), and with probability $1-r$ a different direction. The probability of being found at vertex $x$ at time $t$ is defined as
\begin{equation}\label{rwp}
    \mu_t^{RW}(x)=\psi_t(x,x-1)+\psi_t(x,x+1).
\end{equation}
Put $M_2(\mathbb{C})$ and $\mathrm{U}(2)$ as the sets of the $2$-dimensional complex valued matrices and the unitary matrices. 
Let us consider an equivalent expression to this random walk on the space $(M_2(\mathbb{C}))^{\mathbb{Z}\times\mathbb{Z}}$ as follows. Set $2\times2$ matrices P and Q as
\begin{equation*}
    P=\left[
\begin{array}{cc}
a & b \\
0 & 0
\end{array}
\right],\;
Q=\left[
\begin{array}{cc}
0 & 0 \\
0 & d
\end{array}
\right],
\end{equation*}
where $P+Q=\left[
\begin{array}{cc}
a & b \\
c & d
\end{array}
\right]\in U(2)$ with $|a|^2=|d|^2=r$, $|b|^2=|c|^2=1-r$. 
Define $\mathcal{L}^{RW} : (M_2(\mathbb{C}))^{\mathbb{Z}\times\mathbb{Z}}\rightarrow(M_2(\mathbb{C}))^{\mathbb{Z}\times\mathbb{Z}}$ by 
\begin{equation}\label{lrw}
    (\mathcal{L}^{RW}\rho)(x,y)=P\rho(x+1, y+1) P^*+Q\rho (x-1, y-1)Q^*
\end{equation}
for any $\rho\in(M_2(\mathbb{C}))^{\mathbb{Z}\times\mathbb{Z}}$ and $(x,y)\in \mathbb{Z}\times \mathbb{Z}$. 
For each time step $t=0,1,2,\cdots$ with the initial state $ \rho_{0}\in(M_2(\mathbb{C}))^{\mathbb{Z}\times\mathbb{Z}}$, $\rho_t\in(M_2(\mathbb{C}))^{\mathbb{Z}\times\mathbb{Z}}$ is obtained by $\rho_{t+1}= \mathcal{L}^{RW}\rho_{t}$. Thus this walk with the time evolution $\mathcal{L}^{RW}$ can be regarded as an open quantum random walk~\cite{AttalEtAl}. Let us see that this open quantum random walk is isomorphic to the persistent random walk as follows. The time iteration can be reexpressed by
\begin{align*}  
\rho_{t+1}(x,x)&=P\rho_t(x+1,x+1)P^*+Q\rho_t(x-1,x-1)Q^*\\
&=\bra{-}\rho_t(x+1,x+1)\ket{-}\ket{L}\bra{L}+\bra{+}\rho_t(x-1,x-1)\ket{+}\ket{R}\bra{R}. 
\end{align*}
The second equality derives from $P=\ket{L}\bra{-}, Q=\ket{R}\bra{+}$, where 
we set $\ket{L}, \ket{R}, \ket{-}, \ket{+}$ as 
\begin{equation*}
    \ket{L}=\left[
\begin{array}{cc}
1  \\
0 
\end{array}
\right], \;\ket{R}=\left[
\begin{array}{cc}
0 \\
1 
\end{array}
\right],\; 
\ket{-}=\left[
\begin{array}{cc}
\bar{a}  \\
\bar{b}
\end{array}
\right], \;
\ket{+}=\left[
\begin{array}{cc}
\bar{c}  \\
\bar{d}
\end{array}
\right].
\end{equation*}
Therefore, the following equation is derived:
\begin{equation*}
 \bra{-}\rho_{t+1}(x,x)\ket{-} =r\bra{-}\rho_{t}(x+1,x+1)\ket{-}+(1-r)\bra{+}\rho_{t}(x-1,x-1)\ket{+}
\end{equation*}
\begin{equation*}
\bra{+}\rho_{t+1}(x,x)\ket{+}
=r\bra{-}\rho_{t}(x+1,x+1)\ket{-}+(1-r)\bra{+}\rho_{t}(x-1,x-1)\ket{+}, 
\end{equation*}
which is the same as the local time evolution at $x\in \mathbb{Z}$ on the space $\ell^2(A_\mathbb{Z})$ (\ref{2}) and (\ref{3}).
Indeed we obtain the following proposition. 
\begin{proposition}
For a persistent random walk with the initial state $\mu_t^{RW}(0)=\delta_0(x)\left[\begin{matrix}
    \alpha&\beta
\end{matrix}\right]^\top$, if $\rho_0(x,y)=\delta_{0,0}(x,y)\rho_0$ satisfies $\bra{-}\rho_0\ket{-}=\alpha$ and $\bra{+}\rho_0\ket{+}=\beta$,  then for all $t=0,1,2,\cdots$,
\begin{equation*}
\mu_t^{RW}(x) = \operatorname{tr}[\rho_t(x,x)].
\end{equation*}
Here,
\begin{equation*}
\ket{-} = \left[\begin{matrix}
    a&b
\end{matrix}\right]^*,\;
\ket{+} = \left[\begin{matrix}
    c&d
\end{matrix}\right]^*.
\end{equation*}
\end{proposition}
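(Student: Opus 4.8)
The plan is to prove, by induction on $t$, the stronger componentwise statement that $\bra{-}\rho_t(x,x)\ket{-}=\psi_t(x,x-1)$ and $\bra{+}\rho_t(x,x)\ket{+}=\psi_t(x,x+1)$ for every $x\in\mathbb{Z}$ and $t\ge 0$, and then to recover the proposition by summing these two identities. Summing is legitimate only if $\operatorname{tr}[\rho_t(x,x)]$ equals exactly $\bra{-}\rho_t(x,x)\ket{-}+\bra{+}\rho_t(x,x)\ket{+}$, so two ingredients are needed: that the two projections obey the arc recursions (\ref{2}), (\ref{3}), and that the trace splits as this particular sum.

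For the first ingredient I would use that, by the definition (\ref{lrw}), the diagonal blocks form a closed subsystem, since $(\mathcal{L}^{RW}\rho)(x,x)=P\rho(x+1,x+1)P^*+Q\rho(x-1,x-1)Q^*$ involves only diagonal blocks on the right; the off-diagonal blocks are therefore irrelevant to the probability and may be discarded. The displayed identity for $\rho_{t+1}(x,x)$ derived above already shows that $\rho_{t+1}(x,x)$ is diagonal in the standard basis with entries $\bra{-}\rho_t(x+1,x+1)\ket{-}$ and $\bra{+}\rho_t(x-1,x-1)\ket{+}$. Projecting this diagonal matrix back onto $\ket{-}$ and $\ket{+}$ and using $|a|^2=|d|^2=r$, $|b|^2=|c|^2=1-r$ reproduces precisely the linear recursions (\ref{2}) and (\ref{3}) for the pair $\big(\bra{-}\rho_{t+1}(x,x)\ket{-},\ \bra{+}\rho_{t+1}(x,x)\ket{+}\big)$.

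For the second ingredient I would check that $\{\ket{-},\ket{+}\}$ is an orthonormal basis of $\mathbb{C}^2$: normalization is $\langle-|-\rangle=|a|^2+|b|^2=1$ and $\langle+|+\rangle=|c|^2+|d|^2=1$, while orthogonality $\langle-|+\rangle=a\bar c+b\bar d=0$ is exactly the off-diagonal entry of $CC^*=I_2$, i.e. the unitarity of $C$. Hence $\ket{-}\bra{-}+\ket{+}\bra{+}=I_2$, and by cyclicity of the trace $\operatorname{tr}[\rho_t(x,x)]=\bra{-}\rho_t(x,x)\ket{-}+\bra{+}\rho_t(x,x)\ket{+}$ for all $x,t$. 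The induction is then routine: at $t=0$ the hypotheses $\bra{-}\rho_0\ket{-}=\alpha$, $\bra{+}\rho_0\ket{+}=\beta$ together with $\rho_0(x,y)=\delta_{0,0}(x,y)\rho_0$ match the initial arc-weights $\psi_0(0,-1)=\alpha$, $\psi_0(0,1)=\beta$ and zero otherwise; and the inductive step follows because both pairs obey the identical recursions (\ref{2}), (\ref{3}). Summing, and using (\ref{rwp}) and the trace identity, yields $\mu_t^{RW}(x)=\operatorname{tr}[\rho_t(x,x)]$.

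I expect the only genuinely delicate point to be the choice of inductive hypothesis: it must be carried at the level of the two separate projections rather than their sum, because the trace $\operatorname{tr}[\rho_t(x,x)]$ does not satisfy a closed recursion on its own, the leftward and rightward components being mixed with the distinct weights $r$ and $1-r$ in (\ref{2}), (\ref{3}). Only after the componentwise correspondence is established does the orthonormality of $\{\ket{-},\ket{+}\}$ allow the two projections to be collapsed into the trace, giving the stated scalar identity.
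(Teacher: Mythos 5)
Your proof is correct and follows essentially the same route as the paper: identify $\bra{-}\rho_t(x,x)\ket{-}$ and $\bra{+}\rho_t(x,x)\ket{+}$ with the arc weights $\psi_t(x,x-1)$ and $\psi_t(x,x+1)$ via the recursions (\ref{2})--(\ref{3}), then use the completeness relation $\ket{-}\bra{-}+\ket{+}\bra{+}=I_2$ (from the unitarity of $C$) to convert the sum of the two projections into $\operatorname{tr}[\rho_t(x,x)]$. Your write-up is if anything slightly more careful than the paper's, since you make the induction and the orthonormality check explicit and correctly observe that the trace alone does not satisfy a closed recursion, which is exactly why the inductive hypothesis must be carried at the level of the two separate projections.
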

\begin{proof}
    By calculating the trace of $\rho$ in this case, we obtain 
\begin{align*}
    \mathrm{tr} [\;\rho_t(x,x)]&=\mathrm{tr}[\rho_t(x,x)(\;\ket{-}\bra{-}+\ket{+}\bra{+}\;)\;]\\
    &=\mathrm{tr}[\rho_t(x,x)\ket{-}\bra{-}]+tr[\rho_t(x)\ket{+}\bra{+}]\\
    &=\mathrm{tr}[\bra{-}\rho_t(x,x)\ket{-}]+tr[\bra{+}\rho_t(x,x)\ket{+}]\\
    &=\bra{-}\rho_t(x,x)\ket{-}+\bra{+}\rho_t(x,x)\ket{+},
\end{align*}
so if we set $\rho_0$ such that satisfies 
\begin{equation*}
    \left[
\begin{array}{cc}
\alpha \\
\beta
\end{array}
\right]=\left[
\begin{array}{cc}
\bra{-}\rho_0(x,x)\ket{-}  \\
\bra{+}\rho_0(x,x)\ket{+}
\end{array}
\right]
\end{equation*}
when $\left[
\begin{array}{cc}
\alpha  \\
\beta
\end{array}
\right]$ is given as the initial state, the finding probability $\mu_t^{RW}(x)$ defined in (\ref{rwp}) for each time step $t=0,1,2,\cdots$ is given by \begin{equation*}
    \mu_t^{RW}(x)=\mathrm{tr}[\rho_t(x,x)].
\end{equation*}
\end{proof}
\subsection{Quantum walk}
As with the random walk discussed in the previous section, we first the total state space is set by $\ell^2(A_\mathbb{Z})$. For each time step $t=0,1,2,\cdots$ with the initial state $ \psi_{0}\in \ell^2(A_\mathbb{Z})$, $\psi_t\in \ell^2(A_\mathbb{Z})$ is obtained by the following iterations : $\psi_{t+1}= U\psi_{t}$. Here the local time evolution at $x\in \mathbb{Z}$ is denoted by 
\begin{figure}[htbp]
\centering
\begin{minipage}[b]{0.49\columnwidth}
    \centering
    \includegraphics[width=5cm]{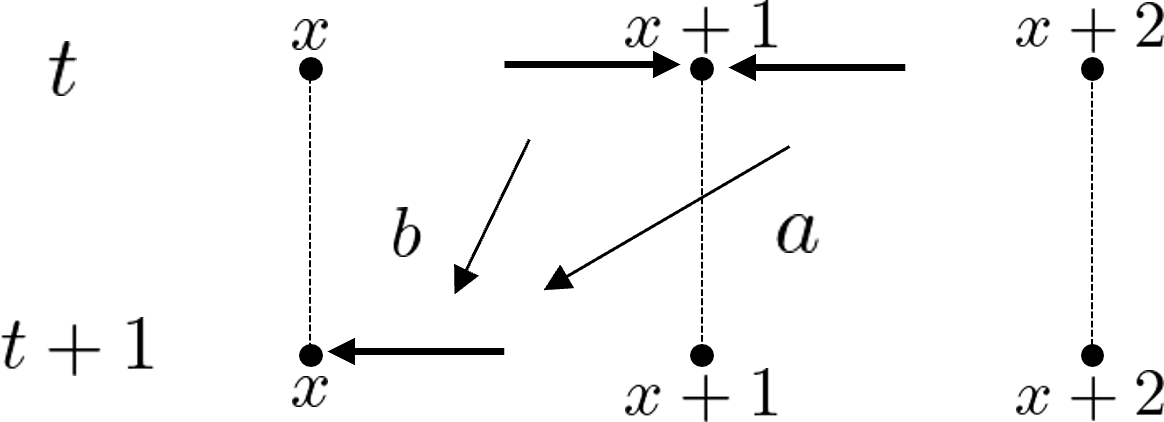}
    \caption{The recursion (\ref{4})}
    \label{fig:3}
\end{minipage}
\begin{minipage}[b]{0.49\columnwidth}
    \centering
    \includegraphics[width=5cm]{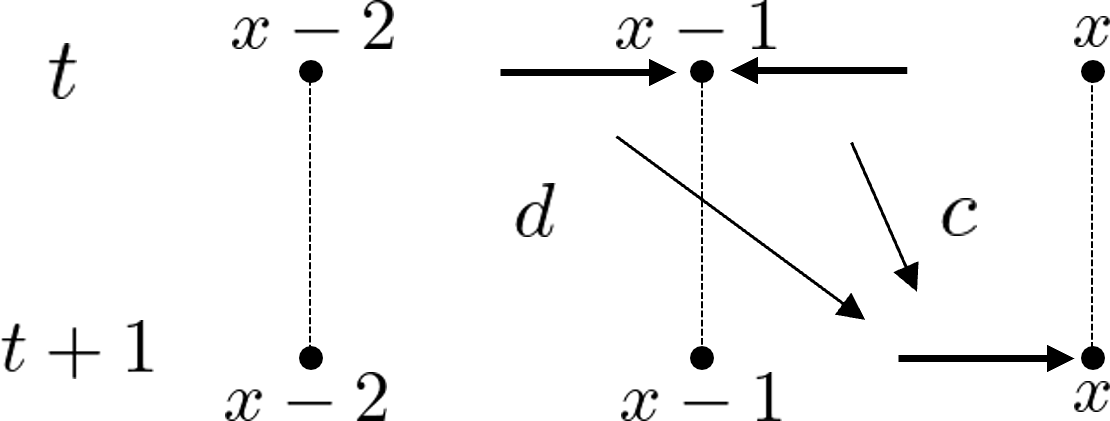}
    \caption{The recursion (\ref{5})}
    \label{fig:4}
\end{minipage}
\end{figure}
\begin{equation}\label{4}
    \psi_{t+1}(x+1,x)=a\psi_{t}(x+2,x+1)+
    b\psi_{t}(x,x+1)
\end{equation} 
\begin{equation}\label{5}
 \psi_{t+1}(x-1,x)=c\psi_{t}(x,x-1)+d\psi_{t}(x-2,x-1)
\end{equation} 
for every $x\in \mathbb{Z}$ and $t\in \mathbb{N}$. 
Here $a,b,c,d\in \mathbb{C}$ and 
$C=\begin{bmatrix}a& b \\ c & d\end{bmatrix}$ is unitary. 
The left-moving quantum walker chooses to move left with probability amplitude $a$ and right with probability amplitude $c$, while the right-moving quantum walker chooses to move left with probability amplitude $b$ and right with probability amplitude $d$. The probability of being found at vertex $x$ at time $t$ is defined as
\begin{equation}\label{qwp}
    \mu_t^{QW}(x)=|\psi_t(x+1,x)|^2+|\psi_t(x-1,x)|^2.
\end{equation} Let us consider an equivalent expression to this quantum walk on the space $(M_2(\mathbb{C}))^{\mathbb{Z}\times\mathbb{Z}}$ as follows. Define $\mathcal{L}^{QW} : (M_2(\mathbb{C}))^{\mathbb{Z}\times\mathbb{Z}}\rightarrow(M_2(\mathbb{C}))^{\mathbb{Z}\times\mathbb{Z}}$ by 
\begin{multline}\label{lqw}
    (\mathcal{L}^{QW}\rho)(x,y)=P\rho(x+1, y+1) P^*+Q\rho(x-1, y-1) Q^*\\
    +P\rho(x+1, y-1)Q^*+Q\rho(x-1, y+1) P^*
\end{multline}
for any $\rho\in(M_2(\mathbb{C}))^{\mathbb{Z}\times\mathbb{Z}}$.  For each time step $t=0,1,2,\cdots$ with the initial state $ \rho_{0}\in(M_2(\mathbb{C}))^{\mathbb{Z}\times\mathbb{Z}}$, $\rho_t\in(M_2(\mathbb{C}))^{\mathbb{Z}\times\mathbb{Z}}$ is given by the recursion $\rho_{t+1}= \mathcal{L}^{QW}\rho_{t}$. 
In particular, if we set $\rho_t'(x,y):= \ket{\psi_t(x)}\bra{\psi_t(y)}$, it satisfies $\mathcal{L}^{QW}\rho'_t=\rho_{t+1}'$ because 
\begin{align*}  
&\ket{\psi_{t+1}(x)}\bra{\psi_{t+1}(x)}\\
&=(\;P|\psi_{t}(x+1)\rangle+Q|\psi_t(x-1)\rangle\;)\;(\;P|\psi_{t}(y+1)\rangle+Q|\psi_t(y-1)\rangle\;)^*\\
&=P\ket{\psi_t(x+1)}\bra{\psi_t(y+1)}P^*+Q\ket{\psi_t(x-1)}\bra{\psi_t(y-1)}Q^*\\
&\qquad\qquad\qquad+P\ket{\psi_t(x+1)}\bra{\psi_t(y-1)}Q^*+Q\ket{\psi_t(x-1)}\bra{\psi_t(y+1)}P^*.
\end{align*}
This corresponds to the local time evolution at $x\in \mathbb{Z}$ on the space $\ell^2(A_\mathbb{Z})$ (\ref{4}) and (\ref{5}).
\begin{proposition}
For a quantum  walk with the initial state $\mu_t^{QW}(0)=\delta_0(x)\left[\begin{matrix}
    \alpha&\beta
\end{matrix}\right]^\top$, if $\rho_0(x,y)=\delta_{0,0}(x,y)\rho_0$ satisfies $\rho_0=\ket{0}\bra{0}\otimes\left[\begin{matrix}
    \alpha&\beta
\end{matrix}\right]^\top\left[\begin{matrix}
    \bar{\alpha}&\bar{\beta}
\end{matrix}\right]$,  then for all $t=0,1,2,\cdots$,
\begin{equation*}
\mu_t^{QW}(x)=\mathrm{tr}[\rho_t(x,x)].
\end{equation*}
\end{proposition}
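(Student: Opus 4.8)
The plan is to reduce the proposition to the rank-one recursion already established just above its statement, and then to match the two resulting fields at $t=0$. First I would fix the identification between the arc amplitudes and the coin vector by setting $\ket{\psi_t(x)}=[\psi_t(x+1,x),\ \psi_t(x-1,x)]^\top$. With this convention the two scalar recursions (\ref{4})--(\ref{5}) are exactly the components of the single update $\ket{\psi_{t+1}(x)}=P\ket{\psi_t(x+1)}+Q\ket{\psi_t(x-1)}$, and the finding probability (\ref{qwp}) reads $\mu_t^{QW}(x)=\langle\psi_t(x)|\psi_t(x)\rangle=\mathrm{tr}\,\rho_t'(x,x)$, where $\rho_t'(x,y):=\ket{\psi_t(x)}\bra{\psi_t(y)}$.

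Next I would verify the base case. The hypothesis $\rho_0=\ket{0}\bra{0}\otimes[\alpha\ \beta]^\top[\bar\alpha\ \bar\beta]$ together with $\rho_0(x,y)=\delta_{0,0}(x,y)\rho_0$ says precisely that $\rho_0(x,y)=\delta_0(x)\delta_0(y)\,[\alpha\ \beta]^\top[\bar\alpha\ \bar\beta]$. Since the initial quantum state is $\ket{\psi_0(x)}=\delta_0(x)[\alpha\ \beta]^\top$, this coincides with $\ket{\psi_0(x)}\bra{\psi_0(y)}=\rho_0'(x,y)$, so $\rho_0=\rho_0'$.

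With the base case in hand, I would obtain $\rho_t=\rho_t'$ for all $t$ by induction: assuming $\rho_t=\rho_t'$, linearity of $\mathcal{L}^{QW}$ gives $\rho_{t+1}=\mathcal{L}^{QW}\rho_t=\mathcal{L}^{QW}\rho_t'=\rho_{t+1}'$, where the last equality is the rank-one computation displayed above the proposition. Restricting to the diagonal and taking the trace then yields $\mathrm{tr}[\rho_t(x,x)]=\mathrm{tr}[\ket{\psi_t(x)}\bra{\psi_t(x)}]=|\psi_t(x+1,x)|^2+|\psi_t(x-1,x)|^2=\mu_t^{QW}(x)$, which is the claim.

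The one point deserving care — and the reason the argument cannot be run purely on the diagonal as in the random-walk case — is that $\mathcal{L}^{QW}$ couples $\rho(x,x)$ to the off-diagonal position entries $\rho(x+1,x-1)$ and $\rho(x-1,x+1)$ through the interference terms $P\rho(x+1,y-1)Q^*$ and $Q\rho(x-1,y+1)P^*$ in (\ref{lqw}). Hence the induction must propagate the full matrix-valued field $\rho_t'(x,y)$ over all pairs $(x,y)$ rather than its diagonal alone; these coherences are exactly what reproduces the quantum interference encoded in $|\psi_t(\cdot)|^2$. I expect this bookkeeping to be the only mild obstacle, and it dissolves as soon as the rank-one recursion $\mathcal{L}^{QW}\rho_t'=\rho_{t+1}'$ is granted, leaving the remaining steps routine.
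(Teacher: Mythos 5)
Your proposal is correct and follows essentially the same route as the paper: an induction on $t$ showing $\rho_t(x,y)=\ket{\psi_t(x)}\bra{\psi_t(y)}$, driven by the rank-one identity $\mathcal{L}^{QW}\rho_t'=\rho_{t+1}'$ displayed just before the proposition, followed by taking the trace on the diagonal. You merely spell out the base case, the identification $\ket{\psi_t(x)}=[\psi_t(x+1,x),\ \psi_t(x-1,x)]^\top$, and the need to carry the full off-diagonal field through the induction, all of which the paper leaves implicit.
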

\begin{proof}
By the induction with respect to $t$, we have 
$\rho_t(x,y)=|\psi_t(x)\rangle\langle\psi_t(y)|$ for any $t\geq 0$. Then taking the trace of $\rho_t(x,x)$, we obtain 
\begin{align*}
    \mathrm{tr} [\rho_t(x,x)]&= \mathrm{tr} [\ket{\psi_t(x)}\bra{\psi_t(x)}]
    = \mathrm{tr}[\braket{\psi_t(x),\psi_t(x)}]\\
    &= ||\psi_t(x)||^2 \\
    &= \mu_t^{QW}(x),
\end{align*}
which implies the desired conclusion.
\end{proof}

\subsection{Interpolating walk}
We define the interpolation model between the random walk and the quantum walk as follows. For a parameter $ 0 \leq p \leq 1 $, let $ \mathcal{L}_p $ be defined as  
\begin{equation*}
    \mathcal{L}_p := (1-p)\mathcal{L}^{QW} + p\mathcal{L}^{RW}.
\end{equation*}
Set 
\begin{equation}\label{model}
\rho_t:=\mathcal{L}_p\rho_{t-1}, 
\end{equation}
with the initial state
$\rho_0(x,y)=(1/2) \delta_{(0,0)}(x,y)I_2$. 
Since $\mathcal{L}_p$ is a convex combination of $\mathcal{L}^{QW}$ in (\ref{lqw}) and $\mathcal{L}^{RW}$ 
in (\ref{lrw}) which are trace preserving, then the 
at each time $ t = 0, 1, 2, \dots $, the finding probability $ \mu_t(x) $ at each position $ x $ can be defined as  
\begin{equation}\label{p}
    \mu_t(x) = \mathrm{tr}[\rho_t(x,x)].
\end{equation}
The operator $\mathcal{L}_p$ is expressed by
\begin{multline*}\label{lw}
\mathcal{L}_p\rho(x,y)= \left\{ P\rho(x+1,y+1)P^*+Q\rho(x-1,y-1)Q^*\right\}\\
+q\left\{ P\rho(x+1,y-1)Q^*+Q\rho(x-1,y+1)P^* \right\}.
\end{multline*}
Then the first and second terms are interpreted as the decoherence and interference terms, respectively, and the strength of the interference is tunable by the parameter $q=1-p\in [0,1]$. 

The following observation is well known, but let us confirm that this equation represents a process in which a random walk is performed with probability $p$, while a quantum walk is executed with probability $q=1-p$. 
Here, we set i.i.d. Bernoulli sequence $\omega=(\omega(1),\omega(2),\dots,\omega(n))$ such that 
$\omega(t)\in\{RW,QW\}\;(t=1,2,\dots,n)$ and  
\begin{equation*}
    \rho_n^\omega:=\mathcal{L}^{\omega(n-1)}\rho_{n-1}^\omega
\end{equation*} 
with $\rho_0^\omega=\rho_0$. 
By taking the average over $\omega\in\{RW, QW\}^n$, 
we have  
\begin{equation*}
E[\rho_n^\omega]
=E[\mathcal{L}^{\omega(n-1)}\rho_{n-1}^\omega]
=((1-p)\mathcal{L}^{QW} + p\mathcal{L}^{RW})\;E[\rho_{n-1}^\omega].
\end{equation*}
Since this equation satisfies the same recurrence relation as (\ref{model}), the interpolation model can be interpreted as a walk that represents the average behavior of a system that randomly chooses the evolution of the random walk with probability $p$ or that of the quantum walk with probability $1-p$ at each time step independently. 
\section{Proof of Theorem 1}
\subsection{Fourier transform}
The following isomorphism is useful for our analysis: for any matrices $A,B$ and $\rho$ of size $N$:
\begin{align*}
    (A\rho B)_{l,m}&=\sum_{i,j}(A)_{l,i}(\rho)_{i,j}(B)_{j,m}\\
    &=\sum_{i,j}(A)_{l,i}(B^\top)_{m,j}(\rho)_{i,j}\\
    &=\sum_{i,j}(A\otimes B^\top)_{(l,m),(i,j)}(\rho)_{i,j},
\end{align*}
where $B^\top$ is the matrix transpose of $B$. 
Therefore when the matrix $\rho\in M_N(\mathbb{C})$ is reinterpreted as a vector $\rho\in \mathbb{C}^{N\times N}$ as $(\rho)_{i,j}=\Vec{\rho}(i,j)$, we have 
\begin{equation*}
    A\rho B\cong(A\otimes B^\top)\Vec{\rho}.
\end{equation*}
From now on, denote the vector representation of $\rho(x,y)\in M_2(\mathbb{C})$ by $\vec{\rho}(x,y)\in \mathbb{C}^4$.  

Let $X_t$ be a random variable following the distribution $\mu_t$. 
To obtain the limit distribution of $X_t$, we concentrate on obtaining the asymptotics of the characteristic function of $X_t$, 
\[ E[e^{i\xi X_t}]=\sum_{x\in \mathbb{Z}} \mu_t(x)e^{i\xi x}. \]
The following lemma plays an important role in computing the characteristic function. 
\begin{lemma}\label{lem:TEOF}
    Set $\vec{\rho}_0(k,l)=\sum_{x,y\in\mathbb{Z}}\vec{\rho}_0(x,y)e^{ikx}e^{ily}$ for any $k,l\in [0,2\pi)$ and 
    \[ \hat{H}_p(k,l)=e^{-i(k+l)}P\otimes\bar{P}+e^{i(k+l)}Q\otimes\bar{Q}+(1-p)(e^{-i(k-l)}P\otimes\bar{Q}+e^{i(k-l)}Q\otimes\bar{P}).\] 
    Then, we have 
    \begin{equation*}\label{cf}
     E[e^{i\xi X_t}]=(\langle LL|+\langle RR|)\int_0^{2\pi}\hat{H}_p^t(\xi-k,k)\vec{\rho}_0(\xi-k,k)\frac{dk}{2\pi}. 
    \end{equation*}
\end{lemma}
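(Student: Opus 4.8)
The plan is to diagonalize the spatial shifts of $\mathcal{L}_p$ by a Fourier transform, turning the density-matrix recursion $\rho_{t+1}=\mathcal{L}_p\rho_t$ into pointwise multiplication by a fixed $4\times4$ matrix on the dual torus. First I would vectorize. Applying the isomorphism $A\rho B\cong (A\otimes B^\top)\vec\rho$ termwise to the explicit form
\[\mathcal{L}_p\rho(x,y)=\{P\rho(x+1,y+1)P^*+Q\rho(x-1,y-1)Q^*\}+q\{P\rho(x+1,y-1)Q^*+Q\rho(x-1,y+1)P^*\},\]
and using $(P^*)^\top=\bar P$ and $(Q^*)^\top=\bar Q$ (with $q=1-p$), I would rewrite the recursion in $\mathbb{C}^4$ as a sum of four spatially shifted copies of $\vec\rho$ carrying coefficient matrices $P\otimes\bar P$, $Q\otimes\bar Q$, $q\,P\otimes\bar Q$ and $q\,Q\otimes\bar P$.

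Next I would apply the transform $\vec\rho(k,l)=\sum_{x,y}\vec\rho(x,y)e^{ikx}e^{ily}$ to both sides. Reindexing each shifted sum multiplies its coefficient matrix by a phase: $\vec\rho(x+1,y+1)$ contributes $e^{-i(k+l)}$, $\vec\rho(x-1,y-1)$ contributes $e^{i(k+l)}$, $\vec\rho(x+1,y-1)$ contributes $e^{-i(k-l)}$, and $\vec\rho(x-1,y+1)$ contributes $e^{i(k-l)}$. Collecting the four terms is precisely $\hat{H}_p(k,l)$, so $\vec\rho_{t+1}(k,l)=\hat{H}_p(k,l)\vec\rho_t(k,l)$; iterating from $\rho_0$ gives $\vec\rho_t(k,l)=\hat{H}_p^t(k,l)\vec\rho_0(k,l)$.

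Finally I would reconstruct the characteristic function. With the vectorized diagonal readout $\operatorname{tr}[\rho(x,x)]=(\bra{LL}+\bra{RR})\vec\rho(x,x)$ — which holds because $\bra{LL}$ and $\bra{RR}$ extract the $(1,1)$ and $(2,2)$ components under the convention $(\vec\rho)_{(i,j)}=\rho_{i,j}$ — I would write $E[e^{i\xi X_t}]=(\bra{LL}+\bra{RR})\sum_x\vec\rho_t(x,x)e^{i\xi x}$ and insert the inverse transform $\vec\rho_t(x,x)=\iint\vec\rho_t(k,l)e^{-i(k+l)x}\,\frac{dk}{2\pi}\frac{dl}{2\pi}$. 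The sum over $x$ then yields $\sum_{x\in\mathbb{Z}}e^{i(\xi-k-l)x}=2\pi\delta(\xi-k-l)$, and integrating out one variable along the line $k+l=\xi$, parametrized as $(\xi-k,k)$, collapses the double integral to the single integral in the statement.

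The step I expect to require the most care is this last collapse: I must track the Fourier conventions so that the two arguments of $\vec\rho_t$ reduce to $(\xi-k,k)$ rather than $(k,\xi-k)$, and justify the Dirac delta arising from the divergent lattice sum $\sum_x e^{i\theta x}$, interpreted as a $2\pi$-periodic distribution on $[0,2\pi)$. By contrast, the vectorization bookkeeping — the identity $(P^*)^\top=\bar P$ and the trace readout $(\bra{LL}+\bra{RR})\vec\rho=\operatorname{tr}\rho$ — is routine once the index convention is fixed.
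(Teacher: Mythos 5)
Your proposal is correct and follows essentially the same route as the paper: vectorize via $A\rho B\cong(A\otimes B^{\top})\vec{\rho}$, Fourier-transform the recursion so that shifts become phases and $\vec{\rho}_t(k,l)=\hat{H}_p^t(k,l)\vec{\rho}_0(k,l)$, then read off the trace of the diagonal. The only cosmetic difference is at the collapse step: the paper runs the computation in the forward direction, evaluating $\int_0^{2\pi}\hat{\rho}_t(\xi-k,k)\,\frac{dk}{2\pi}=\sum_{x,y}\rho_t(x,y)e^{i\xi x}\int_0^{2\pi}e^{ik(y-x)}\frac{dk}{2\pi}=\sum_x\rho_t(x,x)e^{i\xi x}$, which yields a Kronecker delta and avoids the distributional Dirac-delta argument you flag as the delicate point.
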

\begin{proof}
The finding probability $ \mu_t(x)= \mathrm{tr}[\rho_t(x,x)] $ defined in (\ref{p}), so the characteristic function takes the following form:
\begin{equation*}
    E[e^{i\xi X_t}]=\sum_x\mathrm{tr}[\rho_t(x,x)]e^{i\xi x}=\mathrm{tr}\left[\sum_x\rho_t(x,x)e^{i\xi x}\right].
\end{equation*}
Set the Fourier transform 
$(\mathcal{F}\rho_t)(k,l):=\sum_{x,y}\rho_t(x,y)e^{ikx}e^{ily}=:\hat{\rho}_t(k,l)$  ($k,l\in[0,2\pi)$). Then the following equation is obtained: 
\begin{align*}
    \int_0^{2\pi}\hat{\rho}_t(\xi-k,k)\frac{dk}{2\pi}&= \int_0^{2\pi}\sum_{x,y}\rho_t(x,y)e^{i(\xi-k)x}e^{iky}\frac{dk}{2\pi}\\&=\sum_{x,y}\rho_t(x,y)e^{i\xi x}\int_0^{2\pi}e^{ik\xi(y-x)}\frac{dk}{2\pi}\\&=\sum_x\rho_t(x,x)e^{i\xi x}.
\end{align*} 
This implies that the characteristic function of $\rho_t$ can be expressed by
\begin{equation}\label{eq:chara}
    E[e^{i\xi X_t}]=\mathrm{tr}\int_0^{2\pi}\hat{\rho}_t(\xi-k,k)\frac{dk}{2\pi}.
\end{equation}
By the recurrence relation representing the local time evolution, 
\[\rho_{t+1}(x,y)=\mathcal{L}_p\rho_{t}(x,y) = (1-p)\mathcal{L}^{QW}\rho_t(x,y) + p\mathcal{L}^{RW}\rho_t(x,y),\] 
$\hat{\rho}_t(k,l)$ satisfies the following recursion with respect to the time step $t$:
\begin{align*}
   \hat{\rho}_{t+1}(k,l)&=(e^{-ik}P)\hat{\rho}_t(k,l)(e^{-il}P^*)+(e^{ik}Q)\hat{\rho}_t(k,l)(e^{il}Q^*)\\
   &+(1-p)\{(e^{-ik}P)\hat{\rho}_t(k,l)(e^{il}Q^*)+(e^{ik}Q)\hat{\rho}_t(k,l)(e^{-il}P^*)\},
\end{align*}
which implies 
\begin{align}\label{eq:dense}
    \Vec{\rho}_{t}(k,l)=\hat{H}_p(k,l)\vec{\rho}_{t-1}(k,l)=\hat{H}^t_p(k,l)\vec{\rho}_{0}(k,l).
\end{align} 
Inserting (\ref{eq:dense}) into (\ref{eq:chara}), we obtain the desired conclusion. 
\end{proof}
\subsection{Assymptotics of the eigenvalues of $\hat{H}_p(\xi/\sqrt{t}-k,k)$}
To obtain the limit of the characteristic function of $X_t/\sqrt{t}$, we consider the expansion of the time evolution operator in the Fourier space $\hat{H}_p(\xi/\sqrt{t}-k,k)$ in Lemma~\ref{lem:TEOF} as follows so that the Kato perturbation theorem~\cite{Kato} can be applied.  
\begin{align*}
    \hat{H}_p(\xi/\sqrt{t}-k,k)&=e^{-i(\xi/\sqrt{t})}P\otimes\bar{P}+e^{i(\xi/\sqrt{t})}Q\otimes\bar{Q}+(1-p)(e^{-i(\xi/\sqrt{t}-2k)}P\otimes\bar{Q}+e^{i(\xi/\sqrt{t}-2k)}Q\otimes\bar{P})\\
    &=T+\frac{1}{\sqrt{t}}T^{(1)}+\frac{1}{t}T^{(2)}+O\left(\frac{1}{(\sqrt{t})^3}\right), 
\end{align*}
where, 
\begin{align}\label{katot}
    &T=P\otimes \bar{P}+Q\otimes \bar{Q}+(1-p)(e^{2ik}P\otimes \bar{Q}+e^{-2ik}Q\otimes\bar{P}),\\
   \label{katot1} &T^{(1)}=i\xi\{-P\otimes \bar{P}+Q\otimes \bar{Q}+(1-p)(-e^{2ik}P\otimes \bar{Q}+e^{-2ik}Q\otimes\bar{P})\}=i\xi DT,\\
    \label{katot2}&T^{(2)}=-\frac{1}{2}\xi^2\{P\otimes \bar{P}+Q\otimes \bar{Q}+(1-p)(e^{2ik}P\otimes \bar{Q}+e^{-2ik}Q\otimes\bar{P})\}=-\frac{1}{2}\xi^2T.
\end{align}
Here, $D$ is defined as $D=\left[\begin{matrix}
    -1&0\\
    0&1
\end{matrix}\right]\otimes I$.

Then let us put $T(\epsilon):=\hat{H}_p( \epsilon\xi-k,k)$. 
We are interested in the expansion of the perturbed eigenvalues $\lambda(\epsilon)$'s in  $\mathrm{spec}(T(\epsilon))$ which split from the non-perturbed eigenvalue $\lambda:=\lambda(0)$ by the small perturbation $\epsilon$. 
To this end, we check the spectral formation of the non-perturbed matrix $T$. 
\begin{lemma}\label{lem:nonperturbEV}
Let $T$ be the non-perturbed matrix of $T(\epsilon)$. Then $1\in \mathrm{spec}(T)$ and  
\[\dim(\ker(1-T)) = \begin{cases} 1 & \text{: $p\neq 0$, }\\ 2 & \text{: $p=0$. }\end{cases}\]
The vector $[1\;0\;0\;1]^\top\in \mathbb{C}^4$ is an eigenvector of $T$, which is independent of the parameter $p$. 
Moreover for any $\lambda\in \mathrm{spec}(T)\setminus\{1\}$, 
\[ |\lambda| \begin{cases} <1 &\text{: $p\neq 0$, } \\ 
=1 & \text{: $p=0$. }
\end{cases} \]
\end{lemma}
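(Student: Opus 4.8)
The plan is to recognize $T$ as the vectorization of a \emph{quantum channel} and read off its spectrum from channel theory. Writing $U_k:=P+e^{-2ik}Q$, the unitarity of the coin $C=P+Q$ gives $U_k^*U_k=U_kU_k^*=I_2$, together with $P^*P+Q^*Q=PP^*+QQ^*=I_2$ and $PQ^*=QP^*=0$. Expanding $U_k\rho U_k^*$ one checks that the map $\mathcal{T}$ on $M_2(\mathbb{C})$ defined by $\mathcal{T}(\rho)=p(P\rho P^*+Q\rho Q^*)+qU_k\rho U_k^*$ is exactly the operator whose vectorization is $T$ (via $A\rho B\cong(A\otimes B^\top)\vec{\rho}$). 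Thus $\mathcal{T}$ is a convex combination of the measure-and-prepare channel $\rho\mapsto P\rho P^*+Q\rho Q^*$ and the unitary conjugation $\rho\mapsto U_k\rho U_k^*$, i.e.\ a unital, completely positive, trace-preserving map with Kraus operators $\sqrt{p}P,\sqrt{p}Q,\sqrt{q}U_k$. Unitality $\mathcal{T}(I_2)=I_2$ is precisely the assertion that $[1\,0\,0\,1]^\top=\vec{I_2}$ is a $p$-independent eigenvector with eigenvalue $1$, so $1\in\mathrm{spec}(T)$; and since any positive trace-preserving map contracts the trace norm, every eigenvalue lies in $\{|\lambda|\le 1\}$.

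It then remains to pin down the peripheral spectrum $\{\lambda:|\lambda|=1\}$. First I would pass to the Heisenberg dual $\Psi:=\mathcal{T}^\dagger$, $\Psi(X)=p(P^*XP+Q^*XQ)+qU_k^*XU_k$, which is unital and trace-preserving and has the same spectrum and the same fixed-space dimension as $T$. For an eigenvector $\Psi(X)=\lambda X$ with $|\lambda|=1$, the Kadison--Schwarz inequality for the unital $2$-positive map $\Psi$ gives $\Psi(X^*X)\ge\Psi(X)^*\Psi(X)=|\lambda|^2X^*X=X^*X$; taking traces and using trace preservation makes the positive semidefinite difference $\Psi(X^*X)-X^*X$ traceless, hence zero. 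Equality in Kadison--Schwarz, read through the Stinespring isometry $W=\sum_iK_i\otimes\ket{i}$, is equivalent to $(X\otimes I)W=\lambda WX$, that is, to the commutation relations $XK_i=\lambda K_iX$ for every Kraus operator $K_i$.

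For $p\neq 0$ these relations include $XP=\lambda PX$ and $XQ=\lambda QX$. Writing $X=(x_{ij})$ and using $a,b,c,d\neq 0$ (equivalently $0<r<1$), the $(2,1)$ entry of the first relation and the $(1,1)$ entry of the second force $x_{21}=x_{12}=0$, so $X$ is diagonal; the $(1,1)$ and $(2,2)$ entries give $(1-\lambda)x_{11}=(1-\lambda)x_{22}=0$, while an off-diagonal entry gives $x_{11}=\lambda x_{22}$. Hence $\lambda\neq 1$ forces $X=0$, whereas $\lambda=1$ forces $x_{11}=x_{22}$, so the only peripheral eigenvalue is $\lambda=1$ with eigenspace $\mathbb{C}I_2$. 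This yields simultaneously $\dim\ker(1-T)=1$ and $|\lambda|<1$ for all $\lambda\in\mathrm{spec}(T)\setminus\{1\}$. For $p=0$ only the Kraus operator $U_k$ survives and $\mathcal{T}=\mathrm{Ad}_{U_k}$; diagonalizing $U_k=\sum_j e^{i\theta_j}\ket{v_j}\bra{v_j}$ (with $\theta_1\neq\theta_2$, since $b\neq 0$ prevents $U_k$ from being scalar) produces eigenvalues $e^{i(\theta_j-\theta_l)}$, namely $1$ with multiplicity $2$ and the unimodular values $e^{\pm i(\theta_1-\theta_2)}\neq 1$, giving $\dim\ker(1-T)=2$ with all remaining eigenvalues on the unit circle.

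The main obstacle is this peripheral-spectrum step for $0<p<1$: extracting strict contraction from a convex mixture in which the summand $U_k\rho U_k^*$ is itself an isometry. The crossover away from the fully unitary spectrum must come entirely from the dissipative Kraus operators $P,Q$, and the clean way to capture it is the equality case of Kadison--Schwarz combined with the Stinespring dilation. A purely computational alternative would expand the characteristic polynomial of the explicit $4\times4$ matrix $T$ and show its non-$1$ roots lie strictly inside the disk, but this is considerably messier and $k$-dependent. A minor point to verify is the $p=0$ nondegeneracy, where $0<r<1$ is exactly what guarantees a two-dimensional fixed space rather than a larger one.
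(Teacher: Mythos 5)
Your proof is correct, but it takes a genuinely different route from the paper's. The paper factorizes the $4\times 4$ matrix directly as $T=(I-p\Pi_{LR,RL})(D(k)\otimes D(-k))(C\otimes \bar{C})$, i.e.\ a strict Euclidean-norm contraction (off the subspace $\mathrm{Ran}(I-\Pi_{LR,RL})$) composed with a unitary, and then asserts from the unitarity relations of $C$ that the only candidate peripheral eigenvector is $[1\;0\;0\;1]^\top$; the $p=0$ case is handled, as in your argument, by observing that $T$ becomes $(D(k)C)\otimes\overline{(D(k)C)}$. You instead de-vectorize $T$ into the CPTP map $\mathcal{T}(\rho)=p(P\rho P^*+Q\rho Q^*)+qU_k\rho U_k^*$ (which is indeed exactly the paper's $T$ under $A\rho B\cong(A\otimes B^\top)\vec{\rho}$, with unitality of $\mathcal{T}$ giving the $p$-independent eigenvector $\vec{I_2}=[1\;0\;0\;1]^\top$), bound the spectrum by trace-norm contraction, and pin down the peripheral spectrum via the equality case of Kadison--Schwarz and the resulting commutation relations $XK_i=\lambda K_iX$ with the Kraus operators. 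The trade-off: the paper's argument is shorter and stays in $\mathbb{C}^4$, but its key identification of the unique peripheral eigenvector is asserted rather than derived; your channel-theoretic argument imports more machinery (Stinespring, Kadison--Schwarz) yet reduces that identification to a transparent entry-by-entry computation from $XP=\lambda PX$ and $XQ=\lambda QX$, and it explains conceptually why $[1\;0\;0\;1]^\top$ is fixed (unitality) and why decoherence ($p\neq 0$) kills the rest of the peripheral spectrum. Two cosmetic points: the Heisenberg dual corresponds to $T^*$, so its spectrum is the complex conjugate of $\mathrm{spec}(T)$ rather than equal to it (harmless here, since only $|\lambda|$ and $\lambda=1$ matter and $\dim\ker(1-T^*)=\dim\ker(1-T)$); and your argument implicitly uses $abcd\neq 0$, i.e.\ $0<r<1$, which the paper assumes in the introduction but does not restate in the lemma.
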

\begin{proof}
The non-perturbed matrix $T$ is reexpressed by
\[ T=(I-p\Pi_{LR,RL})(D(k)\otimes D(-k))(C\otimes \bar{C}), \]
where $\Pi_{LR,RL}$ is the projection onto $\mathrm{span}\{ |LR\rangle, |RL\rangle \}$ and 
\[D(k)= \begin{bmatrix} e^{-ik} & 0 \\ 0 & e^{ik} \end{bmatrix}. \]
Note that $||(I-p\Pi_{LR,RL})\psi||\leq ||\psi||$ for any $\psi\in \mathbb{C}^4$ and the equality holds if and only if $p=0$ or $\psi\in \mathrm{Ran}(I-\Pi_{LR,RL})$. 
By the properties of the unitarity of $C$, for example, $|a|^2+|b|^2=|c|^2+|d|^2=1$ and $a\bar{b}+c\bar{d}=0$, only the eigenvector of $T$ which belongs to $\mathrm{Ran}(I-\Pi_{LR,RL})$ is $[1\;0\;0\;1]^\top$. This implies that if $p\neq 0$, then $\dim\ker(1-T)=1$. On the other hand, if $p=0$, then $T$ becomes $(D(k)C) \otimes \overline{(D(k)C)}$. Since $D(k)C$ is a unitary, all the eivenvalues of $T$ are unit, in particular, $\dim\ker(1-T)=2$. 
This finishes the proof. 
\end{proof}
Let us concentrate on $p\neq 0$ case, since $p=0$ case reproduces the usual quantum walk driven by the unitary operator. 
By Lemma~\ref{lem:nonperturbEV}, in a large time iteration of $T(\epsilon)$, the eigenspace of the eigenvalue $\lambda(\epsilon)$, which splits from the non-perturbed eigenvalue $1$ of $T$ by the small perturbation $\epsilon$, contributes almost all the behavior. 
The following lemma gives the expansion of $\lambda(\epsilon)$ with respect to $\epsilon$. 
\begin{lemma}\label{lem:ev1perturbed}
Assume $p>0$. 
Let $\lambda(\epsilon)$ be the eigenvalue of $T(\epsilon)$ which splits from the non-perturbed eigenvalue $1$ of $T$. Then we have 
\begin{align*}
\lambda\left(\epsilon\right) =1-\epsilon^2\;\left(\frac{r}{2}\;\frac{1+q^2-2q\cos2l}{1-q^2}\xi^2\right)+O\left(\epsilon^3 \right)
\end{align*}
\end{lemma}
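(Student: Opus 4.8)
The plan is to apply the Kato perturbation theory for the isolated simple eigenvalue $1$ of $T$ (which is simple when $p>0$ by Lemma~\ref{lem:nonperturbEV}) and compute the coefficients in the expansion
\[ \lambda(\epsilon)=1+\epsilon\,\lambda^{(1)}+\epsilon^2\,\lambda^{(2)}+O(\epsilon^3). \]
First I would identify the right and left eigenvectors of $T$ associated with the eigenvalue $1$. The right eigenvector is already supplied: $|v\rangle=[1\;0\;0\;1]^\top$. I would then solve $\langle\tilde v|T=\langle\tilde v|$ explicitly for the left eigenvector $\langle\tilde v|$ and normalize so that $\langle\tilde v|v\rangle=1$; because $T$ is not normal for $p>0$, the left and right eigenvectors differ, and getting $\langle\tilde v|$ correctly (it will depend on $k$, $q$, and the coin entries) is essential for the second-order term. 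Using $T(\epsilon)=T+\epsilon T^{(1)}+\epsilon^2 T^{(2)}+O(\epsilon^3)$ with $T^{(1)}=i\xi DT$ and $T^{(2)}=-\tfrac12\xi^2 T$ from (\ref{katot1})--(\ref{katot2}), the standard Rayleigh--Schr\"odinger formulas give
\[ \lambda^{(1)}=\langle\tilde v|T^{(1)}|v\rangle,\qquad
\lambda^{(2)}=\langle\tilde v|T^{(2)}|v\rangle+\langle\tilde v|T^{(1)}S T^{(1)}|v\rangle, \]
where $S=(T-1)^{-1}$ is the reduced resolvent restricted to the complementary invariant subspace (well defined since every other eigenvalue satisfies $|\lambda|<1$, again by Lemma~\ref{lem:nonperturbEV}).

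Next I would evaluate these three scalar contractions. Because $T v = v$, the term $\langle\tilde v|T^{(2)}|v\rangle=-\tfrac12\xi^2\langle\tilde v|T|v\rangle=-\tfrac12\xi^2$, and similarly the first-order term simplifies to $\lambda^{(1)}=i\xi\,\langle\tilde v|D|v\rangle$. I expect the first-order coefficient to vanish: $D=\mathrm{diag}(-1,1)\otimes I$ acts on $|v\rangle=|LL\rangle+|RR\rangle$, and a short computation of $\langle\tilde v|D|v\rangle$ against the symmetric structure of the eigenvectors should give $0$, which is consistent with the stated expansion having no $O(\epsilon)$ term and with the walk being symmetric under the chosen initial state. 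The genuinely laborious piece is the resolvent term $\langle\tilde v|T^{(1)}S T^{(1)}|v\rangle=-\xi^2\langle\tilde v|DT\,S\,DT|v\rangle$, which requires resolving $DT|v\rangle$ against the three remaining eigenvectors (or equivalently solving the linear system $(1-T)|w\rangle=(I-|v\rangle\langle\tilde v|)DT|v\rangle$ for $|w\rangle$ and then pairing with $\langle\tilde v|DT$).

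The main obstacle will be this second-order resolvent computation: it mixes the $k$-dependence coming from the factor $D(k)\otimes D(-k)$ inside $T$ with the non-self-adjointness of $T$, so I cannot use a clean orthogonal spectral decomposition and must instead invert $(1-T)$ on the three-dimensional complement honestly. My strategy to control this is to use the factorization $T=(I-p\,\Pi_{LR,RL})(D(k)\otimes D(-k))(C\otimes\bar C)$ from the proof of Lemma~\ref{lem:nonperturbEV}, which makes the action on the $\{|LR\rangle,|RL\rangle\}$ block (the only block affected by $p$) transparent, and to exploit the unitarity relations $|a|^2+|b|^2=1$, $a\bar b+c\bar d=0$, and $|a|^2=|d|^2=r$, $|b|^2=|c|^2=1-r$ to collapse the algebra. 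After substituting $q=1-p$, I expect the combination $\langle\tilde v|DT\,S\,DT|v\rangle$ to condense into the closed form $\tfrac{r}{2}\,\tfrac{1+q^2-2q\cos 2l}{1-q^2}$ (with $l=k$), so that collecting the $\xi^2$ factors yields exactly
\[ \lambda(\epsilon)=1-\epsilon^2\left(\frac{r}{2}\,\frac{1+q^2-2q\cos 2l}{1-q^2}\,\xi^2\right)+O(\epsilon^3), \]
completing the proof. I would finish by noting that the $O(\epsilon^3)$ remainder is uniform in $k\in[0,2\pi)$ because $1$ stays isolated from the rest of the spectrum with a spectral gap bounded away from $0$ on the compact $k$-circle, which legitimizes the Kato expansion term-by-term.
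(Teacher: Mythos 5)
Your route is the paper's route: Kato perturbation of the simple eigenvalue $1$, the first-order coefficient $\lambda^{(1)}=i\xi\,\mathrm{tr}[D\Pi_0]=0$, and $\lambda^{(2)}$ obtained from the $T^{(2)}$ contribution plus a reduced-resolvent term that is evaluated by explicitly inverting $1-T$ on the complement of the fixed vector using the unitarity of $C$. One of your anticipations, however, is simply false, though harmlessly so: the left eigenvector of $T$ at $1$ does \emph{not} depend on $k$, $q$ or the coin entries. Since $P^*P+Q^*Q=I$ and $P^*Q=Q^*P=0$, the row vector $[1\;0\;0\;1]$ satisfies $\langle v|T=\langle v|$ for every $k$ and every $p$ (this is just trace preservation of $\mathcal{L}_p$), so the eigenprojection is $\Pi_0=\tfrac12\,|v\rangle\langle v|$ with $|v\rangle=[1\;0\;0\;1]^\top$; your plan of solving $\langle\tilde v|T=\langle\tilde v|$ would discover this, and it is exactly what makes the rest of the computation tractable.

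The genuine problem is in your second-order bookkeeping, which as written would not assemble into the claimed formula. With your convention $S=(T-1)^{-1}$ on the complementary subspace (which coincides with Kato's reduced resolvent $-\sum_{i}\Pi_i/(1-\lambda_i)$, the one the paper uses), the correct formula is $\lambda^{(2)}=\langle\tilde v|T^{(2)}|v\rangle-\langle\tilde v|T^{(1)}ST^{(1)}|v\rangle$, not with a plus sign. Relatedly, your expectation that $\langle\tilde v|DT\,S\,DT|v\rangle$ alone condenses to $\tfrac{r}{2}\,\tfrac{1+q^2-2q\cos 2l}{1-q^2}$ ``so that collecting the $\xi^2$ factors yields exactly'' the lemma cannot be right: the term $\langle\tilde v|T^{(2)}|v\rangle=-\tfrac12\xi^2$ does not go away, and in the paper's calculation it is precisely the cancellation $-\tfrac12+\tfrac{1}{2|b|^2}=\tfrac{r}{2}$ between this term and part of the resolvent contribution, which itself evaluates to $\tfrac{1}{2|b|^2(1-q^2)}\{q^2-1+2|a|^2(1-q\cos 2l)\}$, that produces the final coefficient. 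A smaller point: $l$ is not $k$ in general but $k+\sigma-\delta/2$ with $a=|a|e^{i\sigma}$ and $e^{i\delta}=ad-bc$; these agree for the real Hadamard-type coin but not for a generic unitary $C$. None of this requires a new idea — the strategy is sound and identical to the paper's — but the signs and constants must be repaired before the expansion closes.
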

\begin{proof}
The eigenvalues of $\hat{H}_p$ that are close to $1$ can be expanded as follows because $1$ is a simple eigenvalue of $T$:
\begin{equation}\label{evkato}
\lambda\left(\epsilon\right)=1+\epsilon\lambda^{(1)}+\epsilon^2\lambda^{(2)}+O\left(\epsilon^3\right).
\end{equation}
Here, since $\lambda=1$ is simple, according to \cite{Kato}, $\lambda^{(1)}$ and $ \lambda^{(2)}$ coincide with the coefficients of the second and third orders of the weighted mean of the eigenvalues that form the $\lambda$-group, $(\lambda=1)$; 
\[\lambda^{(1)}=\mathrm{tr}[T^{(1)}\Pi_0]\text{ and } \lambda^{(2)}=\mathrm{tr}[T^{(2)}\Pi_0-T^{(1)}ST^{(1)}\Pi_0],\] 
respectively, 
where $\Pi_0$ is the eigenprojection associated with the eigenvalue $1$ of $T$, and $S$ is the reduced resolvent associated with the eigenvalue $1$ of $T$.
By using (\ref{katot}) and (\ref{katot1}), then we have $\Pi_0=1/2\left[\begin{matrix}
    1&0&0&1
\end{matrix}\right]^\top\left[\begin{matrix}
    1&0&0&1
\end{matrix}\right]$, and
\begin{align}\label{eval1}
    \lambda^{(1)}=\mathrm{tr}[T^{(1)}\Pi_0]=\mathrm{tr}[i\xi DT\Pi_0]=\mathrm{tr}[i\xi D\Pi_0]=\mathrm{tr}\left[\frac{i\xi}{2} \left[\begin{matrix}
    -1&0\\
    0&1
\end{matrix}\right]\otimes I\right]=0.
\end{align}
Let $S$ be defined as $S=-\sum_{i=1}^3\Pi_i/(1-\lambda_i)$, where $\lambda_i$ are the eigenvalues of $T$ that are not equal to $1$, and $P_i$ are the corresponding eigenprojections. $\lambda^{(2)}$ is expressed as follows by using (\ref{katot}), (\ref{katot1}), and (\ref{katot2}): 
\begin{align}
    \lambda^{(2)}
    &=\mathrm{tr}[T^{(2)}\Pi_0-T^{(1)}ST^{(1)}\Pi_0]\notag\\
    &=\mathrm{tr}\left[-\frac{1}{2}\xi^2T\Pi_0-\xi^2 DT\sum_{i=1}^3\frac{\Pi_i}{1-\lambda_i}DT\Pi_0\right]\notag\\
    &=\mathrm{tr}\left[-\frac{1}{2}\xi^2T\Pi_0\right]-\mathrm{tr}\left[\xi^2 DT\sum_{i=1}^3\frac{\Pi_i}{1-\lambda_i}DT\Pi_0\right].\label{ev2}
\end{align}
Since $T\Pi_i=\lambda_i\Pi_i$ ($i=0,1,2,3$), 
the first and second terms of (\ref{ev2}) are reduced to 
\begin{align}\label{trfront}
   \text{the first term:\;\;} &\mathrm{tr}\left[-\frac{1}{2}\xi^2T\Pi_0\right]=\mathrm{tr}\left[-\frac{1}{2}\xi^2\Pi_0\right]=-\frac{1}{2}\xi^2,\\
    \text{the second term:\;\;}&\mathrm{tr}\left[\xi^2DT\sum_{i=1}^3\frac{\Pi_i}{1-\lambda_i} DT\Pi_0\right]
    =\mathrm{tr}\left[\xi^2\sum_{i=1}^3\frac{\lambda_i}{1-\lambda_i} D\Pi_iD\Pi_0\right].
\end{align}
Let us continue the computation of the second term as follows. 
Due to the commutativity of the trace operation, it is shown that
\begin{align*}
    \mathrm{tr}\left[\xi^2\sum_{i=1}^3\frac{\lambda_i}{1-\lambda_i} D\Pi_iD\Pi_0\right]&=\mathrm{tr}\left[\xi^2\sum_{i=1}^3\frac{\lambda_i}{1-\lambda_i} \Pi_iD\Pi_0D\right]\\&=\mathrm{tr}\left[\frac{\xi^2}{2}\sum_{i=1}^3\frac{\lambda_i}{1-\lambda_i}\Pi_i\left[\begin{matrix}
        -1\\
        0\\
        0\\
        -1
    \end{matrix}\right]\left[\begin{matrix}
        -1&0&0&-1
    \end{matrix}\right]\right].
\end{align*}
Note that by setting $T'=T-\Pi_0$, we have  \[\sum_{i=1}^3\frac{\lambda_i}{1-\lambda_i}\Pi_i=(1-T')^{-1}T'.\]
Then let us compute $(1-T')^{-1}T'$ as follows. 
In this case, with $q=1-p$, $T$ is expressed as $T=P\otimes \bar{P}+Q\otimes \bar{Q}+q(e^{2ik}P\otimes \bar{Q}+e^{-2ik}Q\otimes\bar{P})$, and thus 
we have \begin{equation*}
    T'=T-\Pi_0=\left[\begin{matrix}
        |a|^2-\frac{1}{2}&a\bar{b}&\bar{a}b&|b|^2-\frac{1}{2}\\
        -e^{2il}q\bar{a}b&e^{2il}q|a|^2&-e^{2i(l-\sigma)}qb^2&e^{2il}q\bar{a}b\\
        -e^{-2il}qa\bar{b}&-e^{-2i(l-\sigma)}q\bar{b}^2&e^{-2il}q|a|^2&e^{-2il}qa\bar{b}\\|b|^2-\frac{1}{2}&-a\bar{b}&-\bar{a}b&|a|^2-\frac{1}{2}
    \end{matrix}\right]
\end{equation*}
by utilizing the unitarity of the matrix $C$.
Here, $e^{i\sigma},$ $e^{i\delta}$, and $l$ are defined as $a=|a|e^{i\sigma},\; e^{i\delta}=ad-bc,$ and $l=k+\sigma-\delta/2.$
By performing calculations using cofactor expansion, the following result is obtained:
\begin{align*}
(1-T')^{-1}_{1,1}&=(1-T')^{-1}_{4,4}\\
&=\frac{1}{2|b|^2(1-q^2)} \left[\left(\frac{3}{2}-|a|^2\right)\left\{1+q^2(|a|^2-|b|^2)-2q|a|^2\cos{2l}\right\}+2q|a|^2|b|^2(\cos{2l}-q)\right],\\
(1-T')^{-1}_{1,2}&=-(1-T')^{-1}_{2,1}\\&=\frac{1}{2|b|^2(1-q^2)} (1-e^{-2il}q)a\bar{b},\\
(1-T')^{-1}_{1,3}&=-(1-T')^{-1}_{3,1}\\&=\frac{1}{2|b|^2(1-q^2)} (1-e^{2il}q)\bar{a}b,\\
(1-T')^{-1}_{1,4}&=(1-T')^{-1}_{4,1}\\
&=\frac{1}{2|b|^2(1-q^2)} \left[\left(\frac{1}{2}-|a|^2\right)\left\{1+q^2(|a|^2-|b|^2)-2q|a|^2\cos{2l}\right\}+2q|a|^2|b|^2(\cos{2l}-q)\right].
\end{align*}
Therefore, simplifying slightly, we obtain the following formula:
\begin{align*}
    \left((1-T')^{-1}T'\right)_{1,1}&= -\left((1-T')^{-1}T'\right)_{1,4}= -\left((1-T')^{-1}T'\right)_{4,1}= \left((1-T')^{-1}T'\right)_{4,4}\\
   & =\frac{1}{2|b|^2(1-q^2)}\left\{\frac{1}{2}(q^2-1)+|a|^2(1-q\cos{2l})\right\}.
\end{align*}
Put $r=|a|^2/|b|^2$. The above formula simplifies the expression of the second term by  
\begin{align}\label{keisan}
    \text{the second term} &=\mathrm{tr}\left[\frac{\xi^2}{2}(1-T')^{-1}T'\left[\begin{matrix}
        -1\\
        0\\
        0\\
        -1
    \end{matrix}\right]\left[\begin{matrix}
        -1&0&0&-1
    \end{matrix}\right]\right]\notag\\
    &=\frac{\xi^2}{2|b|^2(1-q^2)}\left\{q^2-1+2|a|^2(1-q\cos{2l})\right\}.
    \end{align}
    Inserting  the first term (\ref{trfront}) and the second term (\ref{keisan}) into (\ref{ev2}), we obtain  
    \begin{align}
   \label{eval2} &\lambda^{(2)}=-\frac{1}{2}\xi^2-\frac{\xi^2}{2|b|^2(1-q^2)}\left\{q^2-1+2|a|^2(1-q\cos{2l})\right\}\notag\\
    &\;\;\;\;\;\;=-\frac{r(1+q^2-2q\cos{2l})}{2(1-q^2)}\xi^2.
\end{align}
The eigenvalues of $\hat{H}_p$ that are close to $1$ can be analyzed as follows by inserting (\ref{eval1}) and (\ref{eval2}) into (\ref{evkato}):
\begin{align*}
 \lambda\left(\epsilon\right) =1-\epsilon^2\frac{r}{2}\frac{1+q^2-2q\cos2l}{1-q^2}\xi^2+O\left(\epsilon^3\right).
\end{align*}
\end{proof}

\noindent Now by using Lemmas~\ref{lem:TEOF},  \ref{lem:nonperturbEV} and \ref{lem:ev1perturbed}, let us finish the proof of Theorem~1.\\

\subsection{Proof of Theorem~\ref{thm:main}}
\begin{proof}
Since the initial state is defined by $\rho_0(x,y) I_2$, 
note that the initial state in the Fourier space, $\Vec{\rho}_0(k,l)$, is given by  
 \begin{equation*}
     \Vec{\rho}_0(k,l)=\frac{1}{2}(\;\ket{LL}+\ket{RR}\;)
 \end{equation*}
for any $k,l\in [0,2\pi)$. 
Then using Lemmas~\ref{lem:TEOF}, the characteristic function can be expressed as follows:
 \begin{align*}
 E[e^{i\xi X_t/\sqrt{t}}]&=(\;\bra{LL}+\bra{RR}\;)\int_0^{2\pi}\Vec{\rho}_t(\xi/\sqrt{t}-k,k)\frac{dk}{2\pi}\\
 &=(\;\bra{LL}+\bra{RR}\;)\int_0^{2\pi}H_p^t(\xi/\sqrt{t}-k,k)\;\Vec{\rho}_0(\xi/\sqrt{t}-k,k)\frac{dk}{2\pi}\\
 &\rightarrow\int_0^{2\pi}e^{-\frac{1}{2}\frac{r(1+q^2-2q\cos2l)}{1-q^2}\xi^2}\frac{dl}{2\pi}\;\;\;\;(t\rightarrow\infty)
\end{align*}
Here in the last equation, we used Lemmas~\ref{lem:nonperturbEV}, \ref{lem:ev1perturbed} and the fact that 
eigenvalues other than those in the neighborhood of $1$ approach zero in the limit of $t\to\infty$ due to the Riemann-Lebesgue lemma. 
By Lemma~\ref{lem:ev1perturbed}, 
for the eigenvalues splitting from $1$ by the perturbation $\epsilon$, the second term proportional to $\epsilon$ vanishes, while the third term proportional to $\epsilon$ retains a value. To obtain a non-trivial convergence in the limit, we set $\epsilon$ by $\epsilon=1/\sqrt{t}$. Next, by setting $u=r(1+q^2-2q\cos2l)/(1-q^2)$ and performing substitution integral:
\begin{align*}
    \int_0^{2\pi}e^{-\frac{1}{2}\frac{r(1+q^2-2q\cos2l)}{1-q^2}\xi^2}\frac{dl}{2\pi}
    =\int_{\frac{1-q}{1+q}r}^{\frac{1+q}{1-q}r}\frac{1}{\pi\sqrt{\left|u-\frac{1-q}{1+q}r\right|\left|u-\frac{1+q}{1-q}r\right|}}e^{-\frac{u\xi^2}{2}}du.
\end{align*}
The term $e^{-\frac{s\xi^2}{2}}$ represents the form of the Fourier transform of a normal distribution, so the final form of the characteristic function in the limit is as follows:
\begin{align*}
\int_{\frac{1-q}{1+q}r}^{\frac{1+q}{1-q}r}\frac{1}{\pi\sqrt{\left|u-\frac{1-q}{1+q}r\right|\left|u-\frac{1+q}{1-q}r\right|}}e^{-\frac{u\xi^2}{2}}du=\int_{-\infty}^{\infty}e^{i\xi x}\int_{-\infty}^{\infty}\frac{e^{-\frac{x^2}{2u}}}{\sqrt{2\pi u}}\frac{\boldsymbol{1}_{\left(\frac{1-q}{1+q}r,\frac{1+q}{1-q}r\right)}(u)}{\pi\sqrt{\left|u-\frac{1-q}{1+q}r\right|\left|u-\frac{1+q}{1-q}r\right|}}dudx.
\end{align*}
In other words, the limit of the characteristic function represents the Fourier transform of the product of a normal distribution and an inverse scale factor. Here, $s$ denotes the variance of the normal distribution, and the distribution of $s$ follows the arcsine distribution. The resulting limit distribution corresponds to the mean of this arcsine distribution.
It is completed the proof. 
\end{proof}


\section{Summary and discussion}\label{sect:summary}
In this study, we constructed a model that interpolates between the quantum walk and the random walk on a one-dimensional lattice using a single parameter $p\in[0,1]$. We found that the walk spreads diffusively with respect to the time step, and the limiting distribution converges to a normal variance mixture with the arcsine law depending on the parameter $p$. 

Our limit theorem shows the discontinuity with respect to the parameter $p$ at $p=0$. Then future challenges include cases where the parameter $p$ is very close to zero.
For example, the problem under the setting of the parameter  $p=\gamma/s$ with some positive real value $\gamma$ and the final time of the walk $s$, may be interesting. Because such a setting produces a Poisson occurrence until the final time $s$ of the decoherence for $s\to\infty$, that is, the probability that ``the number of occurrence of $\mathcal{L}^{RW}$ during the time step $s$ is $k$" can be described by $e^{-\gamma }\gamma^k/k!$ for $s\to\infty$.  
Indeed we obtain the following expression for the characteristic function of $X_s/s$ with $s\to \infty$. 
\begin{proposition}\label{prop:poisson}
Let $s$ be the final time step and set $p=\gamma/s$ with some positive real number $\gamma>0$. 
Put $\theta(k)$, $\alpha(k)$ and $\beta(k)$ for $k\in[0,2\pi)$ by 
\[\cos{\theta(k)}=|a|\cos{k},\;\; \alpha(k)=|a|\sin{k}/\sin{\theta(k)},\;\;  \beta(k)=\{1-\alpha(k)^2\}/2.\] 
Then we have 
\begin{align}\label{eq:poisson}
\lim_{t\to\infty}E[e^{i\xi X_s/s}]
&=\frac{1}{2}\int_0^{2\pi}e^{-\beta(k)\gamma+\sqrt{\beta(k)^2\gamma^2-\alpha(k)^2\xi^2}}\left(1+\frac{\sqrt{\beta(k)^2\gamma^2-\alpha(k)^2\xi^2}}
{\beta(k)\gamma}\right)\frac{dk}{2\pi} \notag \\
&\qquad +\frac{1}{2}\int_0^{2\pi}e^{-\beta(k)\gamma-\sqrt{\beta(k)^2\gamma^2-\alpha(k)^2\xi^2}}\left(1-\frac{\sqrt{\beta(k)^2\gamma^2-\alpha(k)^2\xi^2}}{\beta(k)\gamma}\right)\frac{dk}{2\pi}. 
\end{align}
\end{proposition}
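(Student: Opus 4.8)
The plan is to reuse the Fourier/characteristic-function framework of Theorem~\ref{thm:main}, but with the coupled scaling $t=s$, $p=\gamma/s$, $\xi\mapsto\xi/s$, and to replace the simple-eigenvalue expansion of Lemma~\ref{lem:ev1perturbed} by a \emph{degenerate} one carried out on a two-dimensional eigenspace. By Lemma~\ref{lem:TEOF} and the initial state $\rho_0(x,y)=(1/2)\delta_{(0,0)}(x,y)I_2$,
\[ E[e^{i\xi X_s/s}]=(\bra{LL}+\bra{RR})\int_0^{2\pi}\hat H_{\gamma/s}^{\,s}\!\left(\tfrac{\xi}{s}-k,\,k\right)\tfrac12(\ket{LL}+\ket{RR})\,\frac{dk}{2\pi}. \]
First I would Taylor-expand the one-step operator to first order in $1/s$,
\[ \hat H_{\gamma/s}\!\left(\tfrac{\xi}{s}-k,\,k\right)=T_0(k)+\tfrac{1}{s}\,T_1(k,\xi)+O\!\left(\tfrac{1}{s^2}\right),\qquad T_1=i\xi\,D\,T_0-\gamma\,T_0^{\mathrm{off}}, \]
where $T_0=\hat H_0(-k,k)$ is the zero-momentum quantum-walk superoperator, $D=\mathrm{diag}(-1,1)\otimes I$, and $T_0^{\mathrm{off}}\colon\rho\mapsto e^{2ik}P\rho Q^*+e^{-2ik}Q\rho P^*$ is exactly the interference part suppressed by the decoherence. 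The two pieces of $T_1$ come, respectively, from the phases $e^{\pm i\xi/s}$ (the drift $i\xi DT_0$) and from the prefactor $1-\gamma/s$ (the loss $-\gamma T_0^{\mathrm{off}}$). Since space is now rescaled ballistically by $s$, it is this \emph{first}-order term that survives $s$ iterations, in contrast with the $1/\sqrt t$, second-order mechanism of Theorem~\ref{thm:main}.

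The decisive structural input is Lemma~\ref{lem:nonperturbEV} at $p=0$: there $T_0=(D(k)C)\otimes\overline{(D(k)C)}=\hat U\otimes\overline{\hat U}$ with $\hat U=D(k)C$ unitary, so $\mathrm{spec}(T_0)=\{1,1,e^{\pm2i\theta(k)}\}$, where $e^{\pm i\theta(k)}$ are the eigenvalues of $\hat U$ and $\cos\theta(k)=|a|\cos k$. The eigenvalue $1$ is now \emph{doubly} degenerate, spanned in the Hilbert--Schmidt picture by the two populations $\ket{v_+}\bra{v_+}$ and $\ket{v_-}\bra{v_-}$, where $\ket{v_\pm}$ are the eigenvectors of $\hat U$. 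Writing $\hat H=T_0+T_1/s$, I would argue that only this two-dimensional block contributes in the limit: the bands at $e^{\pm2i\theta(k)}$ produce terms of the form $e^{\pm2is\theta(k)}g_\pm(k)$, which vanish after the $dk$-integration by the Riemann--Lebesgue lemma, sharpened by a stationary-phase estimate at the band edges $k=0,\pi$ where $\theta'(k)=0$. On the surviving block, Kato degenerate perturbation theory gives perturbed eigenvalues $1+\mu_\pm/s+o(1/s)$ with $\mu_\pm\in\mathrm{spec}(M)$, where $M:=\Pi T_1\Pi$ is the effective generator and $\Pi$ is the rank-two eigenprojection of $T_0$ onto the eigenvalue $1$; consequently $\hat H^{\,s}$ restricted to the block converges to $e^{M}$.

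Next I would compute $M$ in the orthonormal basis $\{\ket{v_+}\bra{v_+},\ket{v_-}\bra{v_-}\}$. Since $T_0$ acts as the identity there, the drift contributes $i\xi\,\mathrm{diag}(\bra{v_+}Z\ket{v_+},\bra{v_-}Z\ket{v_-})$ with $Z=\mathrm{diag}(-1,1)$; the Hellmann--Feynman identity applied to $\hat U$, using $d\hat U/dk=iZ\hat U$, gives $\bra{v_\pm}Z\ket{v_\pm}=\pm\theta'(k)=\pm\alpha(k)$. For the decoherence term I would use $\Pi T_0^{\mathrm{off}}\Pi=\Pi-\Pi T_0^{\mathrm{diag}}\Pi$, where $T_0^{\mathrm{diag}}\colon\rho\mapsto P\rho P^*+Q\rho Q^*$; reducing the matrix elements $\bra{v_a}P\ket{v_b}$ and $\bra{v_a}Q\ket{v_b}$ via $C\ket{v_\pm}=e^{i\theta_\pm}D(k)^{-1}\ket{v_\pm}$ and the orthonormality of $\{\ket{v_+},\ket{v_-}\}$ collapses every entry to the single population product $2P_LP_R=(1-\alpha^2)/2=\beta(k)$, with $P_{L},P_{R}=(1\mp\alpha)/2$, yielding diagonal entries $-\gamma\beta$ and off-diagonal entries $+\gamma\beta$. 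Hence
\[ M=\begin{bmatrix} -\gamma\beta+i\xi\alpha & \gamma\beta \\ \gamma\beta & -\gamma\beta-i\xi\alpha \end{bmatrix},\qquad \mathrm{spec}(M)=\Big\{-\gamma\beta\pm\sqrt{\gamma^2\beta^2-\alpha^2\xi^2}\Big\}, \]
whose eigenvalues are precisely the two exponents in \eqref{eq:poisson}.

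Finally I would assemble the characteristic function. Both the initial vector $\tfrac12(\ket{LL}+\ket{RR})$, which already lies in the eigenspace, and the measurement covector $\bra{LL}+\bra{RR}$ have coordinates $(1,1)$ in the basis above, so the limit is $\tfrac12\int_0^{2\pi}(1,1)\,e^{M}\,(1,1)^{\top}\,\frac{dk}{2\pi}$. Writing $M=-\gamma\beta\,I+N$ with $N^2=(\gamma^2\beta^2-\alpha^2\xi^2)I$, one has $e^{M}=e^{-\gamma\beta}\big(\cosh\Delta\,I+\tfrac{\sinh\Delta}{\Delta}N\big)$ with $\Delta=\sqrt{\gamma^2\beta^2-\alpha^2\xi^2}$; evaluating the quadratic form and splitting $\cosh\Delta,\sinh\Delta$ into $e^{\pm\Delta}$ yields a two-band expression of the form \eqref{eq:poisson}. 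I expect the main obstacle to be the degenerate perturbation step itself: unlike Lemma~\ref{lem:ev1perturbed}, the relevant eigenvalue of $T_0$ is \emph{double}, so the splitting is governed by the full $2\times2$ block $\Pi T_1\Pi$, and it is the off-diagonal decoherence coupling $\gamma\beta$ that generates the square-root discriminant. A secondary difficulty is making the suppression of the $e^{\pm2i\theta(k)}$ bands rigorous, combining Riemann--Lebesgue with a stationary-phase bound at $k=0,\pi$, together with a uniform-in-$k$ control of the $O(1/s^2)$ remainder so that $\hat H^{\,s}\to e^{M}$ genuinely holds on the block.
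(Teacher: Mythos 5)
Your proposal follows essentially the same route as the paper's Appendix~A proof: expand $\hat H_{\gamma/s}(\xi/s-k,k)=T_0+T_1/s+O(1/s^2)$, note that at $p=0$ the eigenvalue $1$ of $T_0=\hat U\otimes\overline{\hat U}$ is doubly degenerate, apply first-order degenerate Kato perturbation theory to obtain the effective $2\times2$ generator $\Pi T_1\Pi$, discard the $e^{\pm 2i\theta(k)}$ bands by Riemann--Lebesgue, and evaluate the $(1,1)$-quadratic form of the exponential of that generator. Your basis $\{\ket{v_+}\bra{v_+},\ket{v_-}\bra{v_-}\}$ is exactly the paper's $\{\phi_+\otimes\varphi_-,\phi_-\otimes\varphi_+\}$, your $M$ coincides with the paper's $M(l;\xi;\gamma)$ up to relabelling, and your identities $\bra{v_\pm}Z\ket{v_\pm}=\pm\alpha$ and $2P_LP_R=\beta$ are the paper's stated computations.

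One substantive remark. Carrying your final step honestly, with $\Delta=\sqrt{\gamma^2\beta^2-\alpha^2\xi^2}$ and $N=M+\gamma\beta I$, one gets $e^{M}=e^{-\gamma\beta}\bigl(\cosh\Delta\, I+\tfrac{\sinh\Delta}{\Delta}N\bigr)$ and $(1,1)N(1,1)^\top=2\gamma\beta$, hence weights $1\pm\gamma\beta/\Delta$ on the two exponentials --- not the $1\pm\Delta/(\gamma\beta)$ printed in (\ref{eq:poisson}). The same inversion appears in the paper's eigenprojections (\ref{pipm}), which satisfy $M\Pi_\pm=\lambda_\pm\Pi_\pm$ and $\Pi_\pm^2=\Pi_\pm$ but not $\Pi_++\Pi_-=I$, so they are not the spectral projections; the correct ones are $(N\pm\Delta I)/(\pm 2\Delta)$, which reproduce your $1\pm\gamma\beta/\Delta$. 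Your version is the one consistent with the paper's own $\gamma\to0$ Remark (it degenerates smoothly to $\tfrac12(e^{i\alpha\xi}+e^{-i\alpha\xi})$, whereas the printed weights diverge as $\gamma\to0$), so do not force your computation to reproduce (\ref{eq:poisson}) literally. Beyond that, the only points to tighten are the ones you already flag --- uniform-in-$k$ control of the $O(1/s^2)$ remainder and the suppression of the other spectral bands --- which the paper itself treats only briefly.
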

\begin{proof}
The proof is given in Appendix A. 
\end{proof}
\begin{remark}
If we put $\gamma=0$ in the integral of LHS in  (\ref{eq:poisson}), which means $p=0$ (quantum walk), then 
\begin{align*}
\lim_{s\to\infty}E[e^{i\xi X_s/s}] &= \frac{1}{2}\left(\int_{0}^{2\pi} e^{i\xi \alpha(k)} \frac{dk}{2\pi}+\int_{0}^{2\pi} e^{-i\xi \alpha(k)} \frac{dk}{2\pi}\right)  \\
&= \int_{-\infty}^{\infty} e^{i\xi u}\frac{\boldsymbol{1}_{(-|a|,|a|)}(u)}{\pi (1-s^2)\sqrt{|a|^2-u^2}} du.
\end{align*}
This is consistent with the limit theorem for the pure quantum walk~\cite{Konno1}. 
On the other hand, if $\gamma\to \infty$, which may correspond to $p= c(>0)$, then 
\[  \lim_{s\to\infty}E[e^{i\xi X_s/s}] = 1, \]
which means that the limit density function of $X_s^{(p)}/s$ is $\delta_0(x)$. 
This is consistent with the ``{\rm diffusive}" spreading for the interpolating walk with a constant $p$ in Theorem~\ref{thm:main}. 
\end{remark}
Although the expression of Proposition~\ref{prop:poisson} looks suggestive, we could not find a great interpretation from the expression. Thus we remain it as an open problem.  
The limit behavior of the interpolating walk with the parameter $p=1/s^\alpha$, $0<\alpha<1$, is also a future's problem. 
Additionally, while we constructed an interpolation model between persistent random walks and quantum walks on a one-dimensional lattice in this study, considering interpolation models on more general graphs under the construction method proposed by \cite{SGTW} and analyzing their limiting distributions also remain as tasks for future research.
\section*{Acknowledgement}
H.S. acknowledges financial support from the Grant-in-Aid for Grant-in-Aid for JSPS Fellows JSPS KAKENHI (Grant No.~JP24KJ0864).
T.Y. acknowledges financial support from the Grant-in-Aid for JSPS Fellows JSPS KAKENHI (Grant No.~JP23KJ0384).
R.H. acknowledges financial support from the Grant-in-Aid of Transformative Research Areas JSPS KAKENHI (Grant No.~JP22H05197).
E.S. acknowledges financial supports from the Grant-in-Aid of
Scientific Research (C) JSPS KAKENHI (Grant No.~24K06863). 
\appendix
\def\thesection{Appendix \Alph{section}}
\renewcommand{\theequation}{A.\arabic{equation}}
\setcounter{equation}{0}
\section{Proof of Proposition~\ref{prop:poisson}}
\begin{proof}
Put $a=|a|e^{i\sigma}$, $ad-bc=e^{i\delta}$. 
We divide the proof into the following $3$ steps. 
\begin{itemize}
\item {\bf Step 1}: 
First let us rewrite the time evolution operator in the Fourier space $\hat{H}_p(\xi/t-k,k)$ in Lemma~\ref{lem:TEOF} as follows so that the Kato perturbation theory~\cite{Kato} can be applied.  
\begin{align*}
    \hat{H}_p(\xi/t-k,k)&=e^{-i(\xi/t)}P\otimes\bar{P}+e^{i(\xi/t)}Q\otimes\bar{Q}+(1-p)(e^{-i(\xi/t-2k)}P\otimes\bar{Q}+e^{i(\xi/t-2k)}Q\otimes\bar{P})\\
    &=T+\frac{1}{t}T^{(1)}+O\left(\frac{1}{t^2}\right), 
\end{align*}
where 
\begin{align}\label{td}
    &T
    =(e^{ik}P+e^{-ik}Q)\otimes (e^{-ik}\bar{P}+e^{ik}\bar{Q}),\\
    &
\label{t1d}    T^{(1)}
    =D(\xi,\gamma)T,
\end{align}
and 
\begin{equation}\label{dd}
D(\xi,\gamma)=i\xi\left(\left[\begin{matrix}
    -1 & 0\\
    0& 1
\end{matrix}\right]\otimes I\right)-\gamma\left(\left[\begin{matrix}
    1 & 0\\
    0& 0
\end{matrix}\right]\otimes\left[\begin{matrix}
    0 & 0\\
    0& 1
\end{matrix}\right]+\left[\begin{matrix}
    0 & 0\\
    0& 1
\end{matrix}\right]\otimes \left[\begin{matrix}
    1 & 0\\
    0& 0
\end{matrix}\right]\right).\end{equation}
The eigenvalues of the non-perturbed unitary matrix $T$ are described by
\[ \mathrm{spec}(T)=\{1,1,e^{\pm 2i\theta(\ell) }\}.\]
Let $\lambda_\pm (\epsilon)$ be the eigenvalue of $T(\epsilon)$ which split from the non-perturbed eigenvalue $1$ of $T$. The eigenvalues of $\hat{H}_p$ that are close to $1$ can be expanded as follows because $1$ is a semi-simple eigenvalue of $T$:
\begin{equation}\label{ep}
\lambda_\pm\left(\epsilon\right)=1+\epsilon\lambda^{(1)}_\pm+O\left(\epsilon^2\right).
\end{equation}
Here, according to \cite{Kato}, 
$\lambda_\pm^{(1)}\in \mathrm{spec}(\Tilde{T}^{(1)}|_{\mathrm{Ran}(\Pi)})$, where $\Tilde{T}^{(1)}=\Pi T^{(1)}\Pi$ and $\Pi$ is the eigenprojection associated with the eigenvalue $1$ of $T$.
The eigenprojection $\Pi$ can be expressed by using $\phi_+\otimes\varphi_-$ and $\phi_-\otimes\varphi_+$, which are the eigenvectors corresponding to the eigenvalue of $\lambda=1$ of $T$, that is, 
\begin{align}\label{pid}
\Pi=\ket{\phi_+,\varphi_-}\bra{\phi_+,\varphi_-}+\ket{\phi_-,\varphi_+}\bra{\phi_-,\varphi_+}. 
\end{align}
Note that $|\phi_\pm\rangle$ and $|\varphi_\pm\rangle $ are the eigenvectors of the unitary matrices $e^{ik}P+e^{-ik}Q$ and $e^{-ik}\bar{P}+e^{ik}\bar{Q}$ whose concrete expressions are  
\begin{align*}  
\ket{\phi_\pm} &= \frac{1}{\sqrt{2-2|a|\cos(\pm\theta(l)- l)}}\left(\begin{matrix}
        b\\
        e^{\pm i\theta(l)}e^{i(\sigma-l)}-a
    \end{matrix}\right) \in \ker(e^{i\delta/2}e^{\pm i\theta(l)}-(e^{ik}P+e^{-ik}Q)),\\ 
\ket{\varphi_\pm} &=\frac{1}{\sqrt{2-2|a|\cos(\pm\theta(l)+ l)}}\left(\begin{matrix}
        \bar{b}\\
        e^{\pm i\theta(l)}e^{-i(\sigma-l)}-\bar{a}
    \end{matrix}\right)\in \ker(e^{-i\delta/2}e^{\pm i\theta(l)}-(e^{-ik}\bar{P}+e^{ik}\bar{Q}) ), 
\end{align*}
where $l=k+\sigma-\delta/2$. 

\item {\bf Step 2}: Secondly, let us obtain the closed form of $\lambda_{\pm}^{(1)}$. To this end, let us see that 
$\tilde{T}^{(1)}|_{\mathrm{Ran}(\Pi)}$ can be essentially reduced to the following $2$-dimensional matrix $M(l;\xi;\gamma)$ by 
\[\tilde{T}^{(1)}=\left[\;\ket{\phi_+,\varphi_-}\;\;\ket{\phi_-,\varphi_+}\;\right]\; M(l;\xi;\gamma)\; \left[\;\ket{\phi_+,\varphi_-}\;\;\ket{\phi_-,\varphi_+}\;\right]^*,
 \]
where
\[ M(l;\xi;\gamma):=\left[\begin{matrix}
    -i\alpha(l)\xi-\beta(l)\gamma&\beta(l)\gamma\\
    \beta(l)\gamma&i\alpha(l)\xi-\beta(l)\gamma
\end{matrix}\right] \]
in the following. By using (\ref{td}), (\ref{t1d}), (\ref{dd}), and (\ref{pid}), we have  
\begin{align*}
    &\Pi T^{(1)}\Pi\\
    &=\Pi D(\xi,\gamma)T\Pi\\
    &=\Pi D(\xi,\gamma)\Pi\\
    &=i\xi \Pi\left(\left[\begin{matrix}
    -1 & 0\\
    0& 1
\end{matrix}\right]\otimes I\right)\Pi-\gamma \Pi\left(\left[\begin{matrix}
    1 & 0\\
    0& 0
\end{matrix}\right]\otimes\left[\begin{matrix}
    0 & 0\\
    0& 1
\end{matrix}\right]\right)\Pi-\gamma \Pi\left(\left[\begin{matrix}
    0 & 0\\
    0& 1
\end{matrix}\right]\otimes \left[\begin{matrix}
    1 & 0\\
    0& 0
\end{matrix}\right]\right)\Pi\\
    &=(-i\alpha(l)\xi-\beta(l)\gamma)\ket{\phi_+,\varphi_-}\bra{\phi_+,\varphi_-}+(i\alpha(l)\xi-\beta(l)\gamma)\ket{\phi_-,\varphi_+}\bra{\phi_-,\varphi_+}\\
    &\qquad\qquad\qquad\qquad\qquad\qquad\qquad\qquad+\beta(l)\gamma(\ket{\phi_+,\varphi_-}\bra{\phi_-,\varphi_+}+\ket{\phi_-,\varphi_+}\bra{\phi_+,\varphi_-})\\
    &=\left[\;\ket{\phi_+,\varphi_-}\;\;\ket{\phi_-,\varphi_+}\;\right]\left[\begin{matrix}
    -i\alpha(l)\xi-\beta(l)\gamma&\beta(l)\gamma\\
    \beta(l)\gamma&i\alpha(l)\xi-\beta(l)\gamma
\end{matrix}\right]\left[\;
    \bra{\phi_+,\varphi_-}\;\;
    \bra{\phi_-,\varphi_+}\;\right]^*.
\end{align*}
Here we used the following computational results in the fourth equality :
\begin{align*}    \mp\alpha(l)=&\Braket{\phi_\pm,\left[\begin{matrix}
    -1 & 0\\
    0& 1
\end{matrix}\right]\phi_\pm}, \\
\beta(l)=&\Braket{\phi_\pm,\left[\begin{matrix}
    1 & 0\\
    0& 0
\end{matrix}\right]\phi_\pm}\Braket{\varphi_\mp,\left[\begin{matrix}
    0 & 0\\
    0& 1
\end{matrix}\right]\varphi_\mp}+\Braket{\phi_\pm,\left[\begin{matrix}
    0 & 0\\
    0& 1
\end{matrix}\right]\phi_\pm}\Braket{\varphi_\mp,\left[\begin{matrix}
    1 & 0\\
    0& 0
\end{matrix}\right]\varphi_\mp}\\
=& -\left\{\Braket{\phi_\pm,\left[\begin{matrix}
    1 & 0\\
    0& 0
\end{matrix}\right]\phi_\mp}\Braket{\varphi_\mp,\left[\begin{matrix}
    0 & 0\\
    0& 1
\end{matrix}\right]\varphi_\pm}+\Braket{\phi_\pm,\left[\begin{matrix}
    0 & 0\\
    0& 1
\end{matrix}\right]\phi_\mp}\Braket{\varphi_\mp,\left[\begin{matrix}
    1 & 0\\
    0& 0
\end{matrix}\right]\varphi_\pm}\right\}.
\end{align*}
Therefore, since the $2$-dimensional matrix $M(l;\xi;\gamma)$ is isomorphic to $\tilde{T}^{(1)}|_{\mathrm{Ran}(\Pi)}$, the eigenvalues of $M(l;\xi;\gamma)$ coincide with $\lambda^{(1)}_{\pm}$. Then we have 
\begin{align}\label{eq:second}
    \lambda_\pm^{(1)}=-\beta(l)\gamma\pm\sqrt{\{\beta(l)\}^2\gamma^2-\{\alpha(l)\}^2\xi^2},
\end{align}
with the eigenprojections 
\begin{align}\label{pipm}
\Pi_\pm=\frac{1}{2}\left[\begin{matrix}
    1 &\frac{-i\alpha(l)\xi\pm\sqrt{\{\beta(l)\}^2\gamma^2-\{\alpha(l)\}^2\xi^2}}{\beta(l)\gamma} \\
    \frac{i\alpha(l)\xi\pm\sqrt{\{\beta(l)\}^2\gamma^2-\{\alpha(l)\}^2\xi^2}}{\beta(l)\gamma}& 1
\end{matrix}\right].
\end{align}

\item {\bf Step 3}: Finally, let us complete the proof by using (\ref{eq:second}) and (\ref{pipm}).  
Let $\Pi_\pm(\epsilon)$ be the eigenprojection of the eigenvalues $\lambda_\pm(\epsilon)$ defined in (\ref{ep}), respectively. 
Recall the integral expression of Lemma~\ref{lem:TEOF} for the characteristic function of $X_t^{(p)}$. 
Since the contribution to the characteristic function of the eigenvalues in the expression of the integral form Lemma~\ref{lem:TEOF} other than $\lambda_\pm(\epsilon)$ can be estimated by $O(1/\sqrt{t})$ in the limit of $t$ by the Riemann–Lebesgue lemma, it is sufficient to consider $\hat{H}_p(\xi/t-k,k)$ restricted to the eigenspaces of  $\lambda_\pm(\epsilon)$ with $\epsilon=1/t$ as follows:
\begin{align*}
    \hat{H}_p^t(\xi/t-k,k)|_{\mathrm{Ran}(\Pi_\pm(1/t))}&=\left(1+\frac{1}{t}\lambda_++O\left(\frac{1}{t^2}\right)\right)^t\left[\;\ket{\phi_+,\varphi_-}\;\;\ket{\phi_-,\varphi_+}\;\right]\left(\Pi_++O\left(\frac{1}{t}\right)\right)\left[\begin{matrix}
    \bra{\phi_+,\varphi_-}\\
    \bra{\phi_-,\varphi_+}
\end{matrix}\right]\\
&+\left(1+\frac{1}{t}\lambda_-+O\left(\frac{1}{t^2}\right)\right)^t\left[\;\ket{\phi_+,\varphi_-}\;\;\ket{\phi_-,\varphi_+}\;\right]\left(\Pi_-+O\left(\frac{1}{t}\right)\right)\left[\begin{matrix}
    \bra{\phi_+,\varphi_-}\\
    \bra{\phi_-,\varphi_+}
\end{matrix}\right]\\
&\rightarrow e^{\lambda_+}\left[\ket{\phi_+,\varphi_-}\;\;\ket{\phi_-,\varphi_+}\right]\Pi_+\left[\begin{matrix}
    \bra{\phi_+,\varphi_-}\\
    \bra{\phi_-,\varphi_+}
\end{matrix}\right]\\
&\qquad\qquad +e^{\lambda_-}\left[\ket{\phi_+,\varphi_-}\;\;\ket{\phi_-,\varphi_+}\right]\Pi_-\left[\begin{matrix}
    \bra{\phi_+,\varphi_-}\\
    \bra{\phi_-,\varphi_+}
\end{matrix}\right]\;\;\;(t\rightarrow\infty) .
\end{align*}
Since we set $\Vec{\rho}_0$ satisfying 
 \begin{equation*}
     \Vec{\rho}_0(\xi/t-k,k)=\frac{1}{2}(\ket{LL}+\ket{RR}),
 \end{equation*}
 the characteristic function can be expressed as follows:
 \begin{align*}
 E[e^{i\xi X_t/t}]&=(\bra{LL}+\bra{RR})\int_0^{2\pi}\Vec{\rho}_t(\xi/t-k,k)\frac{dk}{2\pi}\\
 &=(\bra{LL}+\bra{RR})\int_0^{2\pi}H_p^t(\xi/t-k,k)\Vec{\rho}_0(\xi/t-k,k)\frac{dk}{2\pi}\\
 &\rightarrow\frac{1}{2}\int_0^{2\pi}e^{\lambda_+}(\bra{LL}+\bra{RR})\left[\ket{\phi_+,\varphi_-}\;\;\ket{\phi_-,\varphi_+}\right]\Pi_+\left[\begin{matrix}
    \bra{\phi_+,\varphi_-}\\
    \bra{\phi_-,\varphi_+}
\end{matrix}\right](\ket{LL}+\ket{RR})\frac{dl}{2\pi}\\
&+\frac{1}{2}\int_0^{2\pi}e^{\lambda_-}(\bra{LL}+\bra{RR})\left[\ket{\phi_+,\varphi_-}\;\;\ket{\phi_-,\varphi_+}\right]\Pi_-\left[\begin{matrix}
    \bra{\phi_+,\varphi_-}\\
    \bra{\phi_-,\varphi_+}
\end{matrix}\right](\ket{LL}+\ket{RR})\frac{dl}{2\pi}\;\;(t\rightarrow\infty).
\end{align*} 
Inserting  the expressions of $\Pi_\pm$ in (\ref{pipm}) and using the following facts 
 \begin{align*}&(\bra{LL}+\bra{RR})\ket{\phi_\pm,\varphi_\mp}\bra{\phi_\mp,\varphi_\pm}(\ket{LL}+\ket{RR)}=(\bra{LL}+\bra{RR})\ket{\phi_\pm,\varphi_\mp}\bra{\phi_\pm,\varphi_\mp}(\ket{LL}+\ket{RR)}=1,
 \end{align*}
 we obtain the desired conclusion:
\begin{align*}
     &E[e^{i\xi X_t/t}]\\
     &\rightarrow\frac{1}{2}\int_0^{2\pi}e^{\lambda_+}\left(1+\frac{\sqrt{\beta(l)^2\gamma^2-\{\alpha(l)\}^2\xi^2}}{\beta(l)\gamma}\right)\frac{dl}{2\pi}+\frac{1}{2}\int_0^{2\pi}e^{\lambda_-}\left(1-\frac{\sqrt{\beta(l)^2\gamma^2-\{\alpha(l)\}^2\xi^2}}
     {\beta(l)\gamma}\right)\frac{dl}{2\pi}\;\;\;\\
     &(t\rightarrow\infty).
\end{align*}
\end{itemize}
\end{proof}

\bibliographystyle{junsrt}
\bibliography{BJSIAM_bib_UTF8}
\end{document}